\newcommand{\ceil}[1]{\left\lceil #1 \right\rceil }
\providecommand{\abs}[1]{\lvert#1 \rvert}
\newcommand{\dorg}{BRF\xspace}
\newcommand{\dorgs}{BRFs\xspace}
\newcommand{\NP}{\mathbf{NP}}
\DeclareMathOperator{\matching}{matching}
\newcommand{\calR}{\mathcal{R}}
\newcommand{\calRD}{\mathcal{R}_{\downarrow}}
\newcommand{\calC}{\mathcal{C}}
\newcommand{\calZ}{\mathcal{Z}}
\newcommand{\calK}{\mathcal{K}}
\newcommand{\calI}{\mathcal{I}}
\newcommand{\calS}{\mathcal{S}}
\newcommand{\vertA}{\mathbf{A}}
\newcommand{\vertB}{\mathbf{B}}
\newcommand{\vertC}{\mathbf{C}}
\newcommand{\vertD}{\mathbf{D}}
\newcommand{\ZZ}{\mathbb{Z}}
\newcommand{\RR}{\mathbb{R}}
\newcommand{\mis}{\mathbf{mis}}
\newcommand{\mhs}{\mathbf{mhs}}
\newcommand{\MIS}{\textrm{MIS}\xspace}
\newcommand{\MHS}{\textrm{MHS}\xspace}
\newcommand{\WMIS}{\textrm{WMIS}\xspace}
\newcommand{\area}{\mathrm{area}}
\newcommand{\LP}{\mathrm{LP}}
\newcommand{\QSTAB}{\textrm{QSTAB}}
\newcommand{\wit}{\mathrm{wit}}
\newcommand{\tint}{\mathrm{int}}
\newcommand{\defi}[1]{\emph{#1}}
\journal{arXiv}
\newtheorem{theorem}{Theorem}[section]
\newtheorem{lemma}[theorem]{Lemma}
\newtheorem{conjecture}[theorem]{Conjecture}
\newtheorem{proposition}[theorem]{Proposition}
\begin{document}

\begin{frontmatter}

%% Title, authors and addresses

%% use the tnoteref command within \title for footnotes;
%% use the tnotetext command for the associated footnote;
%% use the fnref command within \author or \address for footnotes;
%% use the fntext command for the associated footnote;
%% use the corref command within \author for corresponding author footnotes;
%% use the cortext command for the associated footnote;
%% use the ead command for the email address,
%% and the form \ead[url] for the home page:
%%
%% \title{Title\tnoteref{label1}}
%% \tnotetext[label1]{}
%% \author{Name\corref{cor1}\fnref{label2}}
%% \ead{email address}
%% \ead[url]{home page}
%% \fntext[label2]{}
%% \cortext[cor1]{}
%% \address{Address\fnref{label3}}
%% \fntext[label3]{}

\title{Independent sets and hitting sets of bicolored rectangular families\footnote{A preliminary version of this work appeared in IPCO 2011 under the name ``Jump Number of Two-Directional Orthogonal Ray Graphs'' \cite{soto2011jump}.}}

%% use optional labels to link authors explicitly to addresses:
%% \author[label1,label2]{<author name>}
%% \address[label1]{<address>}
%% \address[label2]{<address>}

\author[l1]{Jos\'e A.~Soto}
\ead{jsoto@dim.uchile.cl}
\address[l1]{Department of Mathematical Engineering and Center for Mathematical Modeling, Universidad de Chile}

\author[l2]{Claudio Telha}
\ead{claudio.telha@uclouvain.be}
\address[l2]{Universit\'e catholique de Louvain. }
\begin{abstract}
A bicolored rectangular family \dorg is a collection of all axis-parallel rectangles contained in a given region $\calZ$ of the plane formed by selecting a bottom-left corner from a set $A$ and an upper-right corner from a set $B$. We prove that the maximum independent set and the minimum hitting set of a \dorg have the same cardinality and devise polynomial time algorithms to compute both. As a direct consequence, we obtain the first polynomial time algorithm
to compute minimum biclique covers, maximum cross-free matchings and jump numbers in a class of bipartite graphs that significantly extends convex bipartite graphs and interval bigraphs. We also establish several connections between our work and other seemingly unrelated problems. Furthermore, when the bicolored rectangular family is weighted, we show that the problem of finding the maximum weight of an independent set is $\NP$-hard, and provide
efficient algorithms to solve it on certain subclasses. 
\end{abstract}

\begin{keyword}
%% keywords here, in the form: keyword \sep keyword
Independent Set \sep Hitting Set \sep Axis-parallel rectangles \sep Biclique Cover \sep Cross-free Matching \sep Jump Number.
%% MSC codes here, in the form: \MSC code \sep code
%% or \MSC[2008] code \sep code (2000 is the default)

\end{keyword}

\end{frontmatter}

%%
%% Start line numbering here if you want
%%
% \linenumbers

%% main text

\section{Introduction}

Suppose we are given a collection of axis-parallel closed rectangles in the plane. A subcollection of rectangles that do not pairwise intersect is
called an \emph{independent set}, and a collection of points in the plane intersecting (hitting) every rectangle is called a \emph{hitting set}. In
this paper we study the problems of finding a maximum independent set (\MIS) and a minimum hitting set (\MHS) of restricted classes of rectangles
arising from bicolored point-sets in the plane, and relate them to other problems in graph and poset theory.

Both the \MIS and its weighted version \WMIS are important problems in computational geometry with
a variety of applications \cite{DoerschlerF1992,LentSW1997,AgarwalKS1998,FukudaMMT2001,LewinNO2002}.
Since \MIS is $\NP$-hard~\cite{FowlerPT1981,ImaiA1983}, a significant amount of research has been devoted to heuristics and
approximation algorithms. Charlermsook and Chuzhoy~\cite{ChalermsookC2009}, and Chalermsook~\cite{Chalermsook2011} describe two different 
$O(\log \log m)$-approximation algorithms for the \MIS problem on a family of $m$ rectangles, while Chan and Har-Peled~\cite{ChanHP2012} provide an
$O(\log m / \log\log m)$-approximation factor for \WMIS. The approximation factor achieved by these polynomial time
algorithms are the best so far for general rectangle families. Nevertheless, very recently Adamaszek and Wiese~\cite{AdamaszekW2013} presented a
pseudo-polynomial algorithm achieving a $(1+\varepsilon)$-approximation for \WMIS in general rectangles. For special classes of rectangle families, the
situation is better: there are polynomial time approximation schemes (PTAS) for \MIS in squares~\cite{Chan2003}, rectangles with bounded
width to height ratio~\cite{ErlebachJS2001}, rectangles of constant height~\cite{AgarwalKS1998}, and rectangles forming a pseudo-disc family, that is,
the intersection of the boundaries of any two rectangles consists of at most two points~\cite{AgarwalM2006, ChanHP2012}.

The \MHS problem is the dual of the \MIS problem, and therefore the value of an optimal solution of \MHS is an upper
bound for that of \MIS. The \MHS problem is also $\NP$-hard~\cite{FowlerPT1981}, but there are also PTAS for special cases, including
squares~\cite{Chan2003}, rectangles of constant height~\cite{ChanM2005} and pseudodiscs~\cite{MustafaR2010}.
More recently, Aronov, Ezra, and Sharir~\cite{AronovES2010} proved the existence of $\varepsilon$-nets of size $O(\frac{1}{\varepsilon}\log\log\frac{1}{\varepsilon})$ for axis-parallel rectangle families.
Using Br{\"o}nnimann and Goodrich's  technique~\cite{BronnimannG1995}, this yields an
$O(\log \log \tau)$ approximation algorithm for any rectangle family that can be hit by at most~$\tau$ points yielding the best
approximation guarantee for \MHS known.

\subsection{Main results}
In this article we study both the \MIS and \MHS problems on a special class of rectangle families. Given two finite sets $A,B \subseteq \RR^2$ and an
arbitrary set $\calZ \subseteq \RR^2$ of the plane, we define the \defi{bicolored rectangle family} (\dorg) induced by $A$, $B$ and $\calZ$ as the set 
$\calR=\calR(A,B,\calZ)$ of all rectangles contained in $\calZ$, having bottom-left corner in $A$ and top-right corner in $B$. When $\calZ$ is the
entire plane, we write $\calR(A,B)$ and we call it an unrestricted \dorg.

Our main result is an algorithm that simultaneously finds an independent set $I$ and a hitting set $H$ of $\calR(A,B,\calZ)$ with
$\abs{I}=\abs{H}$. By linear programming duality, $I$ and $H$ are a \MIS and a \MHS of $\calR$, respectively; meaning in particular
that we have a max-min relation between both quantities.

Our algorithm is efficient: if $n$ is the number of points in $A\cup B$, both the \MIS and the \MHS of $\calR$ can
be computed deterministically in $O(n^{2.5}\sqrt{\log n})$ time. If we allow randomization, we can compute them with high probability in time $O((n\log n)^{\omega})$, where $\omega < 2.3727$ is the exponent of square matrix multiplication. Here we implicitly assume that testing if a
rectangle is contained in $\calZ$ can be done in unit time (by an oracle); otherwise, we need additional $O(n^2T(\calZ))$ time, where $T(\calZ)$ is
the time for testing membership in~$\calZ$. The bottleneck of our algorithm is the computation of a maximum matching on a bipartite graph needed for
an algorithmic version of the classic Dilworth's theorem.

We also show that a natural linear programming relaxation for the \MIS of \dorgs can be used to compute the optimal size of the maximum
independent set. Using an uncrossing technique, we prove that one of the vertices in the optimal face of the underlying polytope is integral, and we
show how to find that vertex efficiently. This structural result about this linear program relaxation gives a second algorithm to compute the maximum
independent set of a \dorg and should be of interest by itself.

\subsection{Biclique covers, cross-free matchings and jump number}
Our results have some consequences for two graph problems called the minimum biclique cover and the maximum cross-free matching. Before stating the
relation we give some background on these problems.

A \defi{biclique} of a bipartite graph is the edge set of a complete bipartite subgraph. A \defi{biclique cover} is a collection of bicliques whose union is the entire edge set. Two edges $e$ and $f$ \defi{cross} if there is a biclique containing both. A \defi{cross-free matching} is a collection of pairwise non-crossing edges.

The problem of finding a minimum biclique cover arises in many areas (e.g.~biology~\cite{NauMWA1978}, chemistry~\cite{CohenS99} and communication complexity~\cite{KushilevitzN97}). Orlin~\cite{Orlin77} has shown that finding a minimum biclique cover of a bipartite graph is $\NP$-hard. M\"uller~\cite{Muller96} extended this result to chordal bipartite graphs. To our knowledge, the only classes of graphs for which this problem has been explicitly shown to be polynomially solvable before our work are $C_4$-free bipartite graphs~\cite{Muller96}, distance hereditary bipartite graphs~\cite{Muller96}, bipartite permutation graphs~\cite{Muller96}, domino-free bipartite graphs~\cite{AmilhastreJV97} and convex
bipartite graphs~\cite{Gyori84,FranzblauK84}. The proof of polynomiality for this last class can be deduced from the statement that the minimum biclique cover problem on convex bipartite graphs is equivalent to finding the minimum basis of a family of intervals. The latter was originally studied and solved by Gy\H{o}ri~\cite{Gyori84}, but its connection with the minimum biclique problem was only noted several years later~\cite{AmilhastreJV97} (See Section~\ref{sec:discussion} for more details).

The maximum cross-free matching problem is often studied because of its relation to the jump number problem in Poset Theory. The \defi{jump number} of
a partial order $P$ with respect to a linear extension $L$ is the number of pairs of consecutive elements in $L$ that are incomparable in $P$. The
jump number of $P$ is the minimum of this quantity over all linear extensions. Chaty and Chein~\cite{ChatyC1979} show that computing the jump
number of a poset is equivalent to finding a maximum alternating-cycle-free matching in the underlying comparability graph, which is $\NP$-hard as
shown by Pulleyblank~\cite{Pulleyblank82}. For chordal bipartite graphs, alternating-cycle-free matching and cross-free matchings coincide, making the
jump number problem equivalent to the maximum cross-free matching problem. M\"uller~\cite{Muller90} has shown that this problem is $\NP$-hard for
chordal bipartite graphs, but there are efficient algorithms to solve it on important subclasses. In order of inclusion, there are linear and 
cuadratic 
algorithms for  bipartite permutation graphs~\cite{SteinerS1987,Branstadt1989,Fauck1991}, a cuadratic algorithm for biconvex graphs~\cite{Branstadt1989} and an $O(n^9)$ time algorithm for convex bipartite graphs~\cite{Dahlhaus1994}.

To relate our results with the problems just defined consider the following construction. Given a \dorg $\calR(A,B,\calZ)$, create a bipartite graph
with vertex color classes $A$ and $B$ identifying every rectangle $R\in \calR$ with bottom-left corner $a\in A$ and top-right corner $b\in B$ with an
edge connecting $a$ and $b$. The resulting graph $G=(A\cup B, \calR)$ is the \defi{graph representation} of $\calR$. We call the graphs arising in
this way \defi{\dorg graphs}.

It is an easy exercise to check that for an \defi{unrestricted \dorg graph}---i.e., a graph $G=(A\cup B, \calR)$ where $\calR$ is unrestricted---two
edges $R$ and $R'$ cross if and only if $R$ and $R'$ intersect as rectangles.  In particular, the cross-free matchings of $G$ are in correspondence
with the independent sets of $\calR$. Similarly, the maximal bicliques of $G$ (in the sense of inclusion) are exactly the maximal families of pairwise
intersecting rectangles in $\calR$. Using the Helly property\footnote{If a collection of rectangles pairwise intersect, then there is a point hitting
all of them.} for axis-parallel rectangles we conclude that the minimum hitting set problem on $\calR$ is equivalent to the minimum biclique
cover of $G$.

Our results imply then that for unrestricted \dorg graphs, the maximum size of a cross-free matching equals the minimum size of a biclique cover and both optimizers can be computed in polynomial time. Since unrestricted \dorgs are chordal bipartite~\cite{ShresthaTU2010}, we also obtain polynomial time algorithms to compute the jump number of unrestricted \dorgs.

\subsection{Additional results for the weighted case}
~
We also consider the natural weighted version of \MIS, denoted \WMIS, where each rectangle in the family has a non-negative weight, and we aim to find a collection of disjoint rectangles with maximum total weight. For \dorgs, this problem is equivalent to finding a maximum weight cross-free matching of the associated \dorg graph.

We show that this problem is $\NP$-hard for unrestricted \dorgs and weights in $\{0,1\}$. Afterwards,  we present some results for certain subclasses. For bipartite permutation graphs, we provide an $O(n^2)$ algorithm for arbitrary weights and a specialized $O(n)$ algorithm when weights are in $\{0,1\}$.
  We also note how the algorithm of Lubiw~\cite{Lubiw1991} for the maximum weight set of point-interval pairs readily translates into an algorithm for the \WMIS of convex graphs that runs in $O(n^3)$ time. Recent results of Correa et al.~\cite{CorreaFS14, CorreaFPS14} can be used to extend the $\NP$-hardness to interval bigraphs and to provide a 2-approximation algorithm for \WMIS on this class.

\subsection{Relation with other works}

The min-max result relating independent sets and hitting sets in \dorgs can also be obtained as a consequence of a deep duality result of Frank and Jord\'an for set-pairs \cite{FrankJ95}. Because of the many non-trivial connections between our work and other problems indirectly related to set-pairs, we defer the introduction of this concept and the discussion of these connections to Section~\ref{sec:discussion}, when all our results have been introduced. Although the min-max result is a consequence of an existent result, our algorithmic proof is significantly simpler than those for the larger class of set-pairs. And because of its geometrical nature, it is also more intuitive.

\section{Preliminaries}

We denote the coordinates of the plane $\RR^2$ as $x$ and $y$, so that a point~$p$ is written as $(p_x,p_y)$. Given two points $p$, $p'
\in \RR^2$, we write $p \leq_{\RR^2} p'$ if and only if $p_x \leq p'_x$ and
$p_y \leq p'_y$. For any set $S \subseteq \RR^2$, the \defi{projection} $\{s_x\colon s\in S\}$ of $S$
onto the $x$ axis is denoted by $S_x$.  Given two sets $S$ and  $S'$ in
$\RR^2$, we write $S_x < S'_x$ if the projection $S_x$ is to the left of the
projection $S'_x$, that is, if $p_x < p'_x$ for all $p\in S$, $p'\in S'$. We
extend this convention to $S_x > S'_x, S_x \leq S'_x$ and $S_x \geq S'_x$, as
well as to the projections onto the $y$-axis.
For our purposes, a \emph{rectangle}~$R$ is the cartesian product of two closed
intervals. In other words, we only consider axis-parallel closed rectangles.
We say that two rectangles $R$ and $R'$ \defi{intersect} if they have
a non-empty geometric intersection. A point $p$ \defi{hits} a rectangle $R$ if
$p\in R$.  A collection of  pairwise non-intersecting rectangles is called an
\defi{independent set of rectangles}. A collection of points $H$ is a
\defi{hitting set} of a rectangle family~$\calC$ if each rectangle in
$\calC$ is hit by a point in $H$. We denote by $\mis(\calC)$ and $\mhs(\calC)$
the sizes of a maximum independent set of rectangles in $\calC$ and a minimum hitting
set for $\calC$ respectively.

The \defi{intersection graph} $\calI(\calC)$ of a collection of rectangles $\calC$
is the graph on $\calC$ having  edges between intersecting rectangles:
\begin{align*}
\calI(\calC)\equiv (\calC , E), \quad \text{ where } E=\{RR' \colon  R
\cap R' \neq \emptyset \}.
\end{align*}
Note that the independent sets of rectangles in $\calC$ are exactly the stable sets of $\calI(\calC)$. Furthermore, since $\calC$ has the Helly property, we can assign to every clique in $\calI(\calC)$ a unique \emph{witness point}, defined as the leftmost and lowest point contained in all rectangles of the clique. Since different maximal cliques have different witness points, it is easy to prove that $\calC$ admits a minimum hitting set consisting only of witness points of maximal cliques. In particular, $\mhs(\calC)$ equals the minimum size of a clique-cover of $\calI(\calC)$.

For both \MIS and \MHS, we can restrict ourselves to the family $\calC_\downarrow$ of inclusionwise minimal rectangles in $\calC$: any maximum independent set in $\calC_\downarrow$ is also maximum in $\calC$ and any minimum hitting set for $\calC_\downarrow$ is also minimum for $\calC$. Since the size of every independent set is at most the size of any hitting set, we observe that for any family $\calC$,
\begin{align}
\mis(\calC_\downarrow)=\mis(\calC) \leq \mhs(\calC)=\mhs(\calC_\downarrow).
\end{align}

\subsection{Bicolored Rectangular Families (\dorgs)}

In what follows, let $A$ and $B$ be finite sets of white and gray points on the plane, respectively, and $\calZ \subseteq \RR^2$ be a set not
necessarily finite.

We denote by $\Gamma(a,b)=\{p\in \RR^2\colon a_x \leq
p_x \leq b_x, a_y\leq p_y \leq b_y \}$ the rectangle with bottom-left corner $a$ and
upper-right corner $b$.  The set
$$\calR=\calR(A,B,\calZ)=\{\Gamma(a,b)\colon a\in A, b\in B, a\leq_{\RR^2}
b \text{ and } \Gamma(a,b) \subseteq \calZ\}$$
is called the bicolored rectangular family (\dorg) associated to $(A,B,\calZ)$. 

We denote $n=\abs{A\cup B}$ and use $[k]$ to denote the set $\{1,2, \ldots, k\}$. To make the exposition simpler, we will assume without loss
of generality that $A\cap B=\emptyset$, that $A \cup B \subseteq \calZ$, that the points in $A\cup B$ have integral coordinates\footnote{This can easily be done by translating the plane and applying  piecewise linear transformation on the axis} in $[n]\times [n]$, and that
no two points of $A \cup B$ share a common coordinate value. Under these assumptions, we only need to consider hitting sets in $\calZ\cap [n]^2$.  We
also assume that the set $\calZ$ is given implicitly, i.e.~its size is not part of the input, and that we can test if a rectangle is contained in
$\calZ$ either in unit time (using an oracle) or in a fixed time $T(\calZ)$.

A non-empty intersection between two rectangles $R$ and $S$ in $\calR$ is called a \defi{corner intersection} if either rectangle contains a
vertex of the other rectangle in its interior. Otherwise, the intersection is
called \defi{corner-free}. A \defi{corner-free intersection} (c.f.i.) family is a collection of rectangles
such that every intersection is corner-free. These intersections are generically
shown in
Figure \ref{jump-fig22a}.
\begin{figure}[ht]
\centering
\includegraphics[scale=1]{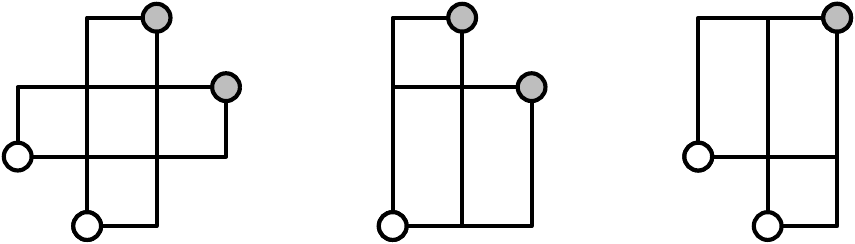}
\caption{Corner-free intersections}\label{jump-fig22a}
\end{figure}
Corner-free intersections families have a special structure, as the next
proposition shows.

\begin{proposition}\label{prop:cfi-comparability}
If $\calK$ is a c.f.i.~family then $\calI(\calK)$ is a comparability graph.
\end{proposition}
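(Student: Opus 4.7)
The plan is to exhibit a transitive orientation of the edges of $\calI(\calK)$. The key geometric lemma I would establish first is a structural characterization of corner-free intersections: two distinct rectangles $R,S$ intersect in a corner-free way if and only if their projections are \emph{cross-nested}, meaning either $R_x \subseteq S_x$ and $S_y \subseteq R_y$, or $S_x \subseteq R_x$ and $R_y \subseteq S_y$. Intuitively, the two rectangles must form the familiar ``plus'' pattern in Figure~\ref{jump-fig22a}, with one rectangle tall and narrow and the other short and wide.

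To establish this characterization, I would run a straightforward case analysis on the relative positions of the $x$-projections (containment in one direction, containment in the other, or honest overlap without containment) and independently of the $y$-projections. Direct inspection of the intersection rectangle $[\max(a_R,a_S),\min(b_R,b_S)]\times[\max(c_R,c_S),\min(d_R,d_S)]$ shows that in every case other than the two cross-nested ones, some vertex of one rectangle ends up inside the interior of the other. Degenerate subcases where projections share an endpoint are handled routinely by noting that shared endpoints are on the boundary and not in the interior.

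With the characterization in hand, orient every edge of $\calI(\calK)$ by declaring $R\to S$ whenever $R_x\subseteq S_x$ and $S_y\subseteq R_y$. The characterization guarantees each edge receives exactly one direction, since if both orientations held for $R\neq S$, the four containments would collapse to $R_x=S_x$ and $R_y=S_y$, forcing $R=S$. For transitivity, suppose $R\to S$ and $S\to T$ for distinct rectangles. Chaining the projection containments gives $R_x\subseteq S_x\subseteq T_x$ and $T_y\subseteq S_y\subseteq R_y$; both pairs of projections then overlap, so $R\cap T\neq\emptyset$, which places $RT$ in $\calI(\calK)$ and orients it as $R\to T$ by construction. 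The main obstacle is the case analysis in the characterization, but it is essentially mechanical once one tracks which projection contains which; everything else is a one-line chain of containments.
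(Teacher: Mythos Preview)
Your proposal is correct and follows essentially the same approach as the paper: the paper defines the relation $R \hookrightarrow S$ iff $R_x \subseteq S_x$ and $S_y \subseteq R_y$, asserts it is a partial order, and observes that in a c.f.i.\ family two rectangles intersect exactly when they are $\hookrightarrow$-comparable, so $\calI(\calK)$ is the comparability graph of $(\calK,\hookrightarrow)$. Your orientation $R\to S$ is precisely $\hookrightarrow$, your cross-nesting characterization is the paper's ``intersect iff comparable'' claim, and your transitivity check is the verification that $\hookrightarrow$ is a partial order; you have simply spelled out the details the paper leaves as ``easy to check.''
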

\begin{proof}

Consider the  relation $\hookrightarrow$ on $\calK$.
given by $R \hookrightarrow S$ if and only if $ R_x \subseteq S_x$ and $S_y
\subseteq R_y$. It is easy to check that $\hookrightarrow$ is a partial order
relation. Since $\calK$ is a c.f.i.~family,  $R$ and $S$ intersect if and only
if they are comparable under~$\hookrightarrow$. Therefore $\calI(\calK)$  is the
comparability graph of $(\calK, \hookrightarrow$).
\end{proof}

The following lemma will also be useful later.

\begin{lemma}\label{lema:inclusionwiseminimal}
Every rectangle in $\calR_{\downarrow}$ does not contain any point in $A\cup B$
other than its two defining corners.
\end{lemma}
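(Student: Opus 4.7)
The plan is to argue by contrapositive: I will show that if $R = \Gamma(a,b) \in \calR$ (with $a \in A$, $b \in B$) contains some point $p \in A \cup B$ different from both $a$ and $b$, then $R$ is not inclusionwise minimal in $\calR$, contradicting $R \in \calR_{\downarrow}$.

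The argument splits into two cases according to the color of the extra point $p$. Suppose first that $p \in A$. Since $p \in \Gamma(a,b)$, we have $a \leq_{\RR^2} p \leq_{\RR^2} b$. By the standing general-position assumption that no two points of $A \cup B$ share a coordinate, $p \neq a$ forces both $p_x > a_x$ and $p_y > a_y$, and $p \neq b$ holds automatically because $A \cap B = \emptyset$. Define $R' = \Gamma(p,b)$: this is a candidate bicolored rectangle with white bottom-left corner $p \in A$ and gray top-right corner $b \in B$, and $R' \subseteq R \subseteq \calZ$, so $R' \in \calR$. Moreover $R' \subsetneq R$ since the bottom-left corner of $R'$ is strictly northeast of $a$. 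Hence $R$ is strictly dominated in $\calR$, and is not in $\calR_{\downarrow}$.

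If instead $p \in B$, the same reasoning applies symmetrically: $p \neq a$ is automatic from $A \cap B = \emptyset$, and $p \neq b$ together with general position forces $p_x < b_x$ and $p_y < b_y$; then $R' = \Gamma(a,p) \in \calR$ and $R' \subsetneq R$, again showing $R \notin \calR_{\downarrow}$.

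There is no real obstacle here: the only point to verify is that the strictly smaller rectangle $R'$ still belongs to $\calR$, and this follows trivially because its corners have the correct colors and $R' \subseteq R \subseteq \calZ$, so the containment-in-$\calZ$ requirement is inherited. Taking the contrapositive yields the lemma.
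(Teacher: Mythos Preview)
Your proof is correct and is exactly the argument the paper has in mind: the paper's own proof consists of the single word ``Direct,'' and your contrapositive argument is the natural unpacking of that. There is nothing to add.
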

\begin{proof}
Direct.
\end{proof}

\subsection{Comparability intersection graphs }

Let $\calK$ be a family of rectangles such that $\calI(\calK)$ is the
comparability graph of an arbitrary partial order $(\calK,\hookrightarrow)$. Independent
sets in $\calK$ correspond then to antichains in $(\calK,\hookrightarrow)$;
therefore the maximum independent set problem in $\calK$ is equivalent to
the \defi{maximum cardinality antichain problem} in $(\calK,\hookrightarrow)$.

Rectangles hit by a fixed point are trivially pairwise intersecting; therefore they are chains in
$(\calK,\hookrightarrow)$. By the Helly property, any family of pairwise intersecting rectangles is hit by single point.
It follows that finding a minimum hitting set of $\calK$ is equivalent to finding a minimum size family of chains in $(\calK,\hookrightarrow)$ covering all the elements of $\calK$. The latter is the description of the \defi{minimum chain covering problem}.

Dilworth's theorem states that for any partial order, the maximum
cardinality of an antichain equals the size of a minimum chain cover. In our
context, this directly translate to:
\begin{lemma}\label{lema:minmaxbasic}
If $\calI(\calK)$ is a comparability graph  then $\mis(\calK)=\mhs(\calK)$.
\end{lemma}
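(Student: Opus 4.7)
The plan is to assemble the two correspondences already spelled out in the preceding two paragraphs and then invoke Dilworth's theorem to close the min-max gap.

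First I would restate that $\mis(\calK)$ equals the size of a maximum antichain in $(\calK,\hookrightarrow)$. Indeed, a subfamily $\calS\subseteq \calK$ is an independent set of rectangles exactly when no two of its members intersect, i.e.~exactly when no two of its members are adjacent in $\calI(\calK)$. Since $\calI(\calK)$ is the comparability graph of $(\calK,\hookrightarrow)$, non-adjacency is incomparability, and stable sets of $\calI(\calK)$ are precisely antichains of $(\calK,\hookrightarrow)$.

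Second, I would argue that $\mhs(\calK)$ equals the size of a minimum chain cover of $(\calK,\hookrightarrow)$, using the Helly property in both directions. On one hand, if $H$ is a hitting set of size $k$, assign each rectangle to one of the points of $H$ that hits it; each class is a family of rectangles containing a common point, hence pairwise intersecting, hence a clique in $\calI(\calK)$, hence a chain in $(\calK,\hookrightarrow)$. This yields a chain cover of size $\leq k$. On the other hand, given a chain cover of size $k$, each chain is a clique of pairwise intersecting rectangles, so by the Helly property for axis-parallel rectangles it admits a common point; collecting these points produces a hitting set of size $\leq k$. Hence $\mhs(\calK)$ is exactly the minimum size of a chain cover of $(\calK,\hookrightarrow)$.

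Combining these two identifications with Dilworth's theorem applied to the partial order $(\calK,\hookrightarrow)$ gives
\begin{equation*}
\mis(\calK) \;=\; \max\{|\calA|:\calA\text{ antichain}\} \;=\; \min\{|\mathcal{F}|:\mathcal{F}\text{ chain cover}\} \;=\; \mhs(\calK),
\end{equation*}
which is the desired equality. There is essentially no obstacle here beyond being careful that the Helly step goes through in both directions; the geometric content of the lemma is entirely captured by this property, while the combinatorial content is Dilworth's theorem.
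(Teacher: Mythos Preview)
Your proof is correct and follows essentially the same approach as the paper: the paper derives the lemma directly from the two correspondences (independent sets $\leftrightarrow$ antichains, hitting sets $\leftrightarrow$ chain covers via Helly) established in the preceding paragraphs, together with Dilworth's theorem. Your write-up is slightly more explicit about checking the Helly-based correspondence in both directions, but the argument is the same.
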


From an algorithmic perspective, finding a maximum antichain and a minimum chain covering on a partial ordered set $\calK$ can
be done by computing a minimum vertex cover and a maximum matching on a bipartite graph~(see, e.g.~\cite{FordF1962}).
Since for bipartite graphs a minimum vertex cover can be obtained from a maximum matching in time proportional to the number of edges in the graph~\cite{Schrijver2003}, the following result holds:
\begin{lemma}(See, e.g., \cite{FordF1962}) \label{lema:antichainalg}
The maximum antichain and the minimum chain covering of a partial order $\calK$ can
be found in $O(\matching(v,e))$, where $v$ is the number of
elements in $\calK$, $e$ is the number of pairs of comparable elements in $\calK$,
and  $\matching(v,e)$ is the running time of an algorithm that solves the
bipartite matching problem with $v$ nodes and $e$ edges.
\end{lemma}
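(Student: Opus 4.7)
The plan is to use the classical reduction of a partial order problem to bipartite matching. I would build an auxiliary bipartite graph $G$ on vertex set $\calK^{-} \cup \calK^{+}$, two disjoint copies of $\calK$, with an edge $\{x^{-}, y^{+}\}$ precisely when $x \neq y$ and $x \hookrightarrow y$. Then $G$ has $2v$ vertices and exactly $e$ edges, and the entire algorithm reduces to (i) computing a maximum matching $M$ in $G$ in $\matching(v,e)$ time and (ii) converting $M$ into the desired antichain and chain cover in additional linear time.

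For the chain cover, I would interpret each matched edge $\{x^{-}, y^{+}\} \in M$ (which by construction satisfies $x \hookrightarrow y$) as linking $x$ to its immediate successor $y$ inside a chain. Since every element appears at most once as a left endpoint and at most once as a right endpoint among matched edges, these links partition $\calK$ into exactly $v - |M|$ disjoint chains. Conversely, any chain decomposition into $k$ chains encodes a matching of size $v - k$, so the matching obtained is maximum if and only if the resulting chain cover has minimum cardinality.

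For the maximum antichain, I would apply König's theorem to $G$: a minimum vertex cover $C$ can be extracted from $M$ in $O(v + e)$ time via a standard alternating-paths exploration starting at the unmatched vertices of $\calK^{-}$. The classical König--Dilworth construction then converts $C$ into an antichain $\calA \subseteq \calK$ of size $v - |C| = v - |M|$, which matches the chain-cover size and is therefore optimal by the trivial inequality $|\calA'| \leq k$ valid for any antichain $\calA'$ and any chain cover of size $k$. All bookkeeping after the matching is $O(v + e)$, and since $\matching(v,e) = \Omega(v + e)$, the total running time is $O(\matching(v,e))$ as claimed. The step requiring the most care is the König--Dilworth extraction of $\calA$ from $C$ together with the verification that no element is double-counted; the detailed argument is classical and is worked out in \cite{FordF1962}.
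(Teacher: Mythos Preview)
Your proposal is correct and follows exactly the classical reduction that the paper invokes: the paper does not give a detailed proof but simply states (just before the lemma) that a maximum antichain and minimum chain cover are obtained from a maximum matching and minimum vertex cover in the associated bipartite graph, citing \cite{FordF1962} and \cite{Schrijver2003}. Your construction of $G$ on $\calK^{-}\cup\calK^{+}$, the chain decomposition from $M$, and the König-based extraction of the antichain are precisely this standard argument, so there is nothing to add.
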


\section{An LP based algorithm for \MIS on \dorgs}\label{sec:lp}

In this section we provide an integrality result for the natural linear relaxations of the \MIS problem on \dorgs. 
Consider an arbitrary family of rectangles $\calC$ (not necessarily a \dorg) whose members  have integer vertices in $[n]^2$.
The \defi{fractional clique constrained independent set polytope} associated to $\calC$ is the polyhedron
 $$\QSTAB(\calC) \equiv \{x \in \RR^{\calR}\colon \sum_{R \in \calC:\ q \in R}x_R
\leq 1  \text{ for all $q \in [n]^2$}; x \geq 0\}.$$

The next result follows directly from Lovasz's Perfect Graph Theorem~\cite{Lovasz1972}.
\begin{proposition}\label{prop:lovasz}
Given a family of rectangles $\calC$ with vertices in $[n]^2$, the polytope $\QSTAB(\calC)$ is integral if and only if $\calI(\calC)$ is a perfect graph.
\end{proposition}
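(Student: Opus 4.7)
The plan is to show that the polytope $\QSTAB(\calC)$ is literally the classical clique-constrained stable set polytope of $\calI(\calC)$, and then invoke the polytopal form of the perfect graph theorem.

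First I would observe one inclusion: every constraint in the definition of $\QSTAB(\calC)$ is a clique constraint in $\calI(\calC)$. Indeed, for any $q \in [n]^2$ the set $K_q = \{R \in \calC : q \in R\}$ consists of rectangles that pairwise intersect (they all contain $q$), so $K_q$ is a clique in $\calI(\calC)$. Hence $\QSTAB(\calC)$ contains the clique polytope
$$P(\calI(\calC)) = \{x \geq 0 : \textstyle\sum_{R \in K} x_R \leq 1 \text{ for every clique } K \text{ of } \calI(\calC)\}.$$

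Second, and this is the step where the geometry is used, I would show the converse inclusion by arguing that every maximal clique of $\calI(\calC)$ arises as $K_q$ for some integer $q \in [n]^2$. Let $K$ be a maximal clique of $\calI(\calC)$. By the Helly property for axis-parallel rectangles, $\bigcap_{R \in K} R$ is a non-empty axis-parallel rectangle. Since every $R \in K$ has integer vertices in $[n]^2$, this intersection is of the form $[\max a_i, \min b_i] \times [\max c_i, \min d_i]$ with integer endpoints, so it contains at least one integer point $q \in [n]^2$ (for instance the leftmost-lowest point, the witness of the clique). Then $K \subseteq K_q$, and since $K_q$ is itself a clique, maximality of $K$ forces $K = K_q$. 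Thus the constraint for $K$ appears verbatim among the defining inequalities of $\QSTAB(\calC)$, and since non-maximal clique constraints are dominated by maximal ones, $P(\calI(\calC)) \supseteq \QSTAB(\calC)$.

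Combining the two inclusions gives $\QSTAB(\calC) = P(\calI(\calC))$. The proposition then follows immediately from the Lovász--Chvátal theorem stating that $P(G)$ is integral if and only if $G$ is a perfect graph, applied with $G = \calI(\calC)$. The only real content is the second inclusion, whose whole weight rests on the Helly property together with the fact that the vertices of all rectangles lie in $[n]^2$, guaranteeing an integer witness point for every maximal clique; the rest is a direct appeal to the cited theorem.
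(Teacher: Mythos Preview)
Your proof is correct and is exactly the argument the paper has in mind: the paper states that the proposition ``follows directly from Lov\'asz's Perfect Graph Theorem''~\cite{Lovasz1972} without giving details, and you have simply made explicit the identification $\QSTAB(\calC)=P(\calI(\calC))$ via the Helly property and the integer witness point of each maximal clique (which the paper itself introduces just before the statement). There is nothing to add.
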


For example, this is the case (by Proposition~\ref{prop:cfi-comparability}), when $\calC$ is a c.f.i.~family, as comparability graphs are known to be perfect. Unfortunately, the same does not hold for \dorgs. Figure~\ref{jump-fig7} shows a \dorg $\calR$ for which $\QSTAB(\calR)$ has a non-integral vertex.
Because this polytope is not a full description of the convex hull of characteristic
vectors of independent sets, linear optimization over $\QSTAB(\calR)$ may lead to non-integral solutions.
\begin{figure}[ht]
\centering
\includegraphics[scale=0.8]{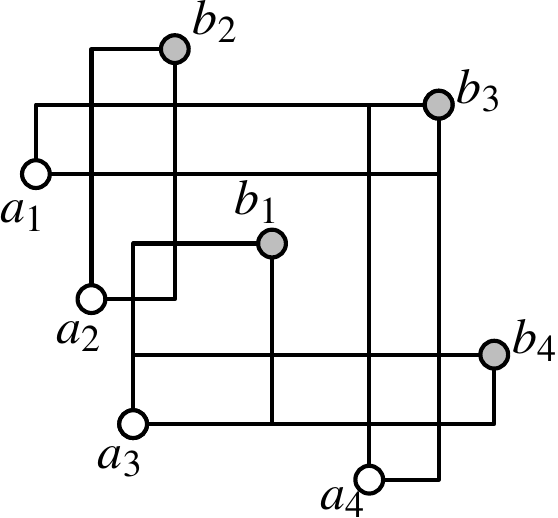}
\caption{A \dorg $\calR(A,B)$ with a non-integral independent set
polytope. The point $x \in \RR^{\calR} $ satisfying $x_R=1/2$ for the five
rectangles shown, and $x_R=0$ for every other rectangle, is a vertex with
non-integral coordinates.}\label{jump-fig7}
\end{figure}

Nevertheless, we will prove that the linear program obtained when we optimize on the all-ones direction:
\begin{align*}
 \mis_\LP(\calR) &=\max\bigg\{\sum_{R \in \calR}x_R: x \in \QSTAB(\calR)
\bigg\},
\end{align*}
has an optimal integral solution. Recalling that $\mis(\calR)$ is the integral version of $\mis_\LP(\calR)$, this result can be stated as follows:

\begin{theorem}\label{thm:opt1}
Let $\calR$ be a non-empty \dorg. There is an optimal integral solution for $\mis_\LP(\calR)$. In other words, $\mis_\LP(\calR)=
\mis(\calR)$.
\end{theorem}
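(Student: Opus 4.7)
The plan is to reduce the problem to the subfamily $\calRD$ of inclusionwise minimal rectangles of $\calR$, where the clique-constrained polytope is provably integral, and then transfer the optimal LP value back to $\calR$ by a short projection argument.

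First I would show that $\calRD$ is a c.f.i.~family. Suppose for contradiction that two rectangles $R, S \in \calRD$ have a corner intersection, so WLOG the bottom-left corner $a_S \in A$ of $S$ lies in the interior of $R$. By Lemma~\ref{lema:inclusionwiseminimal} the only points of $A \cup B$ contained in $R$ are its two defining corners $a_R, b_R$, both of which lie on the boundary of $R$ rather than in its interior. This contradicts $a_S$ being an interior point of $R$, so every intersection between rectangles of $\calRD$ must be corner-free.

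With the c.f.i.~property in hand, Proposition~\ref{prop:cfi-comparability} yields that $\calI(\calRD)$ is a comparability graph, hence perfect, and Proposition~\ref{prop:lovasz} then implies that $\QSTAB(\calRD)$ is an integral polytope. Consequently $\mis_\LP(\calRD) = \mis(\calRD) = \mis(\calR)$, the last equality being the observation from the preliminaries that restricting to minimal rectangles preserves the maximum independent set size.

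It remains to prove $\mis_\LP(\calR) = \mis_\LP(\calRD)$. The inequality $\ge$ is immediate, since any feasible point of $\QSTAB(\calRD)$ extends by zeros to a feasible point of $\QSTAB(\calR)$ with the same objective value. For the reverse direction, given a feasible $x \in \QSTAB(\calR)$, I would assign to each $R \in \calR$ a minimal sub-rectangle $\phi(R) \in \calRD$ with $\phi(R) \subseteq R$ (obtained by descending in $\calR$ until a minimal element is reached), and define $y \in \RR^{\calRD}$ by $y_S = \sum_{R \in \calR:\ \phi(R) = S} x_R$. Then $\sum_S y_S = \sum_R x_R$, and for every point $q \in [n]^2$,
$$\sum_{S \in \calRD:\ q \in S} y_S \;=\; \sum_{R \in \calR:\ q \in \phi(R)} x_R \;\le\; \sum_{R \in \calR:\ q \in R} x_R \;\le\; 1,$$
where the first inequality uses $\phi(R) \subseteq R$. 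Hence $y \in \QSTAB(\calRD)$ certifies $\mis_\LP(\calR) \le \mis_\LP(\calRD)$, and the chain $\mis(\calR) \le \mis_\LP(\calR) \le \mis_\LP(\calRD) = \mis(\calR)$ collapses to equalities, yielding in particular that an integral optimum exists. The main obstacle is really only the first step, the corner-free observation about $\calRD$; once that is made, the LP integrality is inherited from the perfect-graph machinery already assembled, and the projection $\phi$ completes the argument.
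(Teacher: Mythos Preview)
Your argument fails at the very first step: the family $\calRD$ of inclusionwise minimal rectangles is \emph{not} in general a c.f.i.\ family. The ``WLOG'' that the bottom-left corner $a_S\in A$ of $S$ lies in the interior of $R$ is unjustified. A corner intersection only says that \emph{some} vertex of one rectangle lies in the interior of the other, and the relevant vertex may well be the top-left or bottom-right corner, neither of which is a point of $A\cup B$. Lemma~\ref{lema:inclusionwiseminimal} rules out the $\vertA$- and $\vertB$-corners from appearing in the interior of another minimal rectangle, but it says nothing about the $\vertC$- and $\vertD$-corners. Concretely, take $A=\{(1,2),(2,1)\}$, $B=\{(3,4),(4,3)\}$ and $\calZ=\RR^2$. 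Then $R=\Gamma((1,2),(3,4))$ and $S=\Gamma((2,1),(4,3))$ are both inclusionwise minimal (neither contains a point of $A\cup B$ other than its two defining corners), yet $\vertC(S)=(2,3)\in\tint(R)$ and $\vertD(R)=(3,2)\in\tint(S)$, so $R$ and $S$ have a corner intersection. The paper itself notes this: ``the only type of corner-intersection that can occur in $\calR_\downarrow$ is the one depicted in Figure~\ref{jump-fig9}(a)''---so such intersections \emph{do} occur, they just take a single generic shape.

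Because $\calRD$ is not c.f.i., $\calI(\calRD)$ need not be a comparability graph, $\QSTAB(\calRD)$ need not be integral, and the rest of your chain collapses. This is precisely the obstacle the paper's proof is designed to overcome: instead of working with all of $\calRD$, it passes to the \emph{support} $\calR_0$ of an optimal extreme point $x^*$ of the modified LP $\overline{\mis}_\LP(\calR)$ that minimises total weighted area among optima. The area objective lets one perform an uncrossing step (Proposition~\ref{prop:cf}): any corner-intersecting pair $R,R'$ in the support can be replaced by the two ``uncrossed'' rectangles $R'',R'''$, preserving feasibility and the value $\sum_R x_R$ while strictly decreasing total area. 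Only after this uncrossing does one obtain a c.f.i.\ family to which the perfect-graph machinery applies. Your projection map $\phi$ and the reduction $\mis_\LP(\calR)=\mis_\LP(\calRD)$ are fine, but they do not help unless you can establish integrality on the target, and for that the uncrossing argument (or something equivalent) is genuinely needed.
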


 To prove Theorem~\ref{thm:opt1}, we look at optimal solutions for $\mis_\LP(\calR)$
minimizing the geometric area. More precisely, we consider the modified linear
program $\overline{\mis}_\LP(\calR)$, where $\area(R)$ denotes the geometric
area of a rectangle $R\in \calR$ and $z^*$ denotes the optimal cost of
$\mis_\LP(\calR)$:
\begin{align*}
  \overline{\mis}_\LP(\calR) &=\min\bigg\{\sum_{R \in \calR}\area(R)x_R:
\sum_{R \in \calR}x_R =z^*; x \in \QSTAB(\calR) \bigg\}\enspace.
\end{align*}

Here we are using the function $\area\colon \calR \to \RR$, given by $\area([a_x,b_x]\times[a_y,b_y])=(b_x-a_x)\cdot (b_y-a_y)$, but the argument that
follows works for any function satisfying (1) \emph{Nonnegativity:} $\area(R)\geq 0$, (2) \emph{Strict monotonicity:} If $R \subset S$, then $\area(R)
< \area(S)$, and (3) \emph{Crossing bisubmodularity:} If $R=I\times J$, $S=I'\times J'$ are two corner-intersecting rectangles in $\calR_\downarrow$
then $\area((I \cup I')\times (J \cap J')) + \area((I \cap I')\times (J \cup J')) < \area(R) + \area(S)$.

Let $x^*$ be an arbitrary optimal extreme point of $\overline{\mis}_\LP(\calR)$
and let $\calR_{0}= \{R \in \calR: x^*_R>0 \}$ be the \emph{support} of $x^*$.
Since $x^*$ minimizes weighted area, $\calR_{0}$ is a subset of the collection of inclusion-wise minimal
rectangles $\calR_{\downarrow}$. 

By Lemma~\ref{lema:inclusionwiseminimal}, the only type of corner-intersection that can occur in $\calR_{\downarrow}$ is the one depicted in Figure
\ref{jump-fig9}~(a). Next, we prove that those intersections never arise in $\calR_{0}$.

\begin{proposition}\label{prop:cf}
The family $\calR_0$ is c.f.i.
\end{proposition}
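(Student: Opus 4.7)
The plan is a standard uncrossing argument that trades a corner-intersecting pair in the support for a strictly better pair, contradicting the area-minimality of $x^*$. Suppose for contradiction that two rectangles $R = \Gamma(a,b)$ and $S = \Gamma(a',b')$ in $\calR_0$ form a corner intersection. By Lemma~\ref{lema:inclusionwiseminimal}, the corners lying in the interior of the other rectangle cannot belong to $A\cup B$, so they must be the ``non-defining'' (top-left or bottom-right) corners. After relabeling, this forces a configuration of the form $a_x < a'_x \le b_x < b'_x$ and $a'_y < a_y \le b'_y < b_y$. Writing $R = I\times J$ and $S = I'\times J'$, the uncrossed rectangles from the bisubmodularity hypothesis become
\[
R' \;=\; (I\cup I')\times(J\cap J') \;=\; \Gamma(a,b'),\qquad S' \;=\; (I\cap I')\times(J\cup J') \;=\; \Gamma(a',b),
\]
both of which have a bottom-left corner in $A$ and a top-right corner in $B$.

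Next I would verify $R',S'\in\calR$, which reduces to checking $R',S'\subseteq\calZ$ since their corners already belong to $A$ and $B$. The key geometric observation is $R'\cup S'\subseteq R\cup S$: any $(x,y)\in R'$ with $x\le b_x$ lies in $R$ and with $x>b_x$ lies in $S$ (and symmetrically for $S'$), using the inequalities on coordinates of $a,a',b,b'$. Since $R,S\subseteq\calZ$, containment follows. Along the way I would also record the identity $R'\cap S' = R\cap S$, which is immediate from intersecting the box descriptions.

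With $R',S'\in\calR$ in hand, the perturbation step is to set $x' = x^* + \varepsilon(e_{R'}+e_{S'} - e_R - e_S)$ for small $\varepsilon>0$, which is nonnegative because $x^*_R,x^*_S>0$. Feasibility in $\QSTAB(\calR)$ follows from the pointwise inequality $\mathbf{1}_{R'}+\mathbf{1}_{S'}\le \mathbf{1}_R+\mathbf{1}_S$: a point $q$ counted twice on the left lies in $R'\cap S' = R\cap S$ and so is counted twice on the right, while a point counted once on the left lies in $R'\cup S'\subseteq R\cup S$ and so is counted at least once on the right. Hence the clique constraints at every $q\in[n]^2$ are preserved. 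Moreover $\sum_T x'_T = \sum_T x^*_T = z^*$, so $x'$ is still optimal for $\mis_\LP(\calR)$.

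Finally, the bisubmodularity property of $\area$ gives
\[
\sum_{T\in\calR}\area(T)\,x'_T - \sum_{T\in\calR}\area(T)\,x^*_T \;=\; \varepsilon\bigl(\area(R')+\area(S')-\area(R)-\area(S)\bigr) \;<\; 0,
\]
contradicting optimality of $x^*$ for $\overline{\mis}_\LP(\calR)$. I expect the main subtlety to be the verification that $R'$ and $S'$ genuinely belong to $\calR$, because the presence of the arbitrary region $\calZ$ in the definition of a \dorg makes it a priori unclear that uncrossed rectangles stay feasible; the containment $R'\cup S'\subseteq R\cup S$ is what rescues this. Everything else (nonnegativity, clique-constraint preservation, strict decrease of area) is bookkeeping once the correct pair $R',S'$ is identified.
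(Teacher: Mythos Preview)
Your proof is correct and follows essentially the same uncrossing argument as the paper: replace the corner-intersecting pair $\Gamma(a,b),\Gamma(a',b')$ by $\Gamma(a,b'),\Gamma(a',b)$, check feasibility via the pointwise covering inequality, preserve $\sum x_T = z^*$, and obtain a strict area decrease from crossing bisubmodularity. The only cosmetic differences are that the paper shifts by $\lambda=\min\{x^*_R,x^*_{R'}\}$ rather than an infinitesimal $\varepsilon$, and that you spell out the verification $R'\cup S'\subseteq R\cup S\subseteq\calZ$ more carefully than the paper does (it simply asserts the new rectangles lie in $\calR_\downarrow$).
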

\begin{proof}

Suppose that $R=\Gamma(a,b)$ and $R'=\Gamma(a',b')$ are rectangles in $\calR_{0}$ with corner-intersection as in Figure \ref{jump-fig9}~(a).
 \begin{figure}[ht]
\centering
\includegraphics[scale=1]{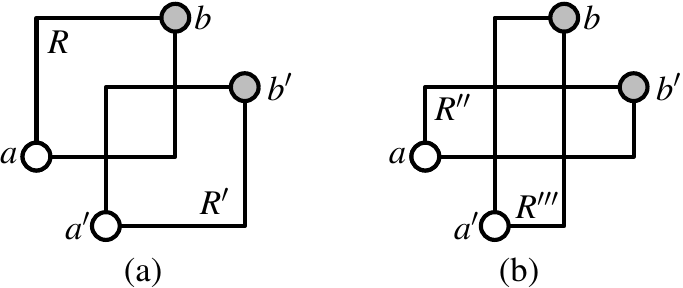}
\caption{Proof of Proposition \ref{prop:cf}.}\label{jump-fig9}
\end{figure}
In Figure \ref{jump-fig9}~(b) we show two
 rectangles, $R''=\Gamma(a,b')$ and $R'''=\Gamma(a',b)$, that also belong to $\calR_{\downarrow}$. We modify $x^*$, first reducing the values of $x^*_R$ and $x^*_{R'}$ by $\lambda\equiv \min\{x^*_R, x^*_{R'}\}$ and then increasing the values of $x^*_{R''}$ and $x^*_{R'''}$ by $\lambda$. Let $\bar{x}^*$ be this modified solution.

Since the weight of the rectangles
covering any point in the plane can only decrease (that
is, $\sum_{S\ni q} x^*_S \geq \sum_{S\ni q} \bar{x}^*_S, \forall q\in
[n]^2$), all the constraints $\sum_{S\ni q} \bar{x}^*_S \leq 1$ must hold. We also have $\sum_{S\in \calR}\bar{x}^*_S =\sum_{S\in \calR} x^*_S =
z^*$, and therefore $\bar{x}^*$ is a feasible solution for $\overline{\mis}_\LP(\calR)$. But the total weighted area of $\bar{x}^*$ is $\lambda\left((\area(R)+\area(R'))-(\area(R'')+\area(R''') \right)>0$ units smaller than the one of the original solution $x^*$, contradicting its optimality.
\end{proof}

The fact that the support of $x^*$ forms a corner-free intersection family is all we need to prove the integrality of $x^*$:

\begin{proposition}\label{prop:integralpoint}
  The point $x^*$ is integral. In particular, $\calR_0$ is a maximum independent set of $\calR$.
\end{proposition}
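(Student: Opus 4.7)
The plan is to show that the corner-free-intersection structure of the support $\calR_0$, established in Proposition~\ref{prop:cf}, forces $x^*$ to inherit the integrality granted by Proposition~\ref{prop:lovasz}. Concretely, by Proposition~\ref{prop:cfi-comparability} the graph $\calI(\calR_0)$ is a comparability graph, hence perfect, so Proposition~\ref{prop:lovasz} says that $\QSTAB(\calR_0)$ is integral. It thus suffices to show that the restriction $\bar x^*$ of $x^*$ to the coordinates $\calR_0$ is a vertex of $\QSTAB(\calR_0)$.

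The first step I would take is to upgrade the extremality of $x^*$: although $x^*$ is defined as an extreme point of the modified linear program $\overline{\mis}_\LP(\calR)$, I claim it is also an extreme point of the bigger polytope $\QSTAB(\calR)$. Indeed, if $x^* = \tfrac{1}{2}(y+z)$ with $y,z\in\QSTAB(\calR)$, then $\sum_R y_R$ and $\sum_R z_R$ are both $\leq z^* = \mis_\LP(\calR)$ and average to $z^*$, so both equal $z^*$; hence $y$ and $z$ are feasible for $\overline{\mis}_\LP(\calR)$. Applying the analogous averaging argument to the area objective (which, by optimality of $x^*$, is lower-bounded by $\area(x^*)$ on this feasible set) forces $y$ and $z$ to achieve the same area as $x^*$, making them optimal for $\overline{\mis}_\LP(\calR)$ and contradicting the extremality of $x^*$ there unless $y=z$.

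With this bootstrap in hand, I would argue that $\bar x^*$ is a vertex of $\QSTAB(\calR_0)$ by a restriction/extension argument: any convex combination $\bar x^* = \lambda \bar y + (1-\lambda)\bar z$ inside $\QSTAB(\calR_0)$ extends, by padding with zeros on $\calR \setminus \calR_0$, to points in $\QSTAB(\calR)$ whose convex combination is $x^*$; the previous step then forces $\bar y = \bar z$. Since $\QSTAB(\calR_0)$ is integral, $\bar x^*$ is $0/1$-valued, and therefore so is $x^*$.

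Finally, the $0/1$ vector $x^*$ is the indicator of $\calR_0$, and the clique constraint $\sum_{R\ni q} x^*_R\le 1$ at any point $q$ shared by two members of $\calR_0$ would be violated, so $\calR_0$ is an independent set of $\calR$ of size $z^* = \mis_\LP(\calR) \geq \mis(\calR)$; thus it is maximum. The only delicate point in the whole argument is the bootstrap observation that $x^*$ is a vertex of $\QSTAB(\calR)$, and once that is in place the rest is a standard restriction/extension argument tied together by the earlier propositions.
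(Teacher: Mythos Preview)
Your proof is correct and follows essentially the same route as the paper: use Proposition~\ref{prop:cfi-comparability} and Proposition~\ref{prop:lovasz} to get integrality of $\QSTAB(\calR_0)$, then argue that (the restriction of) $x^*$ is a vertex there by padding convex combinations back into $\QSTAB(\calR)$ and invoking extremality of $x^*$. The only difference is cosmetic: your ``bootstrap'' step proving that $x^*$ is extreme in $\QSTAB(\calR)$ is more work than necessary, since the feasible region of $\overline{\mis}_\LP(\calR)$ is the face $\{x\in\QSTAB(\calR):\sum_R x_R=z^*\}$ of $\QSTAB(\calR)$, and extreme points of a face are automatically extreme points of the ambient polytope.
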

\begin{proof}
Since $\calR_0$ is the support of $x^*$, the  linear program $\mis_\LP(\calR_0)$
also has $x^*$ as optimal solution.
But $\calI(\calR_0)$ is a comparability graph by Proposition~\ref{prop:cfi-comparability}; therefore, $\QSTAB(\calR_0)$ is integral. To
conclude, we show that $x^*$ is an extreme point of
$\QSTAB(\calR_0)$. Indeed, if this does not hold, then $x^*$ can be written as convex combination of two points in
$\QSTAB(\calR_0) \subseteq \QSTAB(\calR)$, contradicting the fact that $x^*$ is
extreme of $\QSTAB(\calR)$.
\end{proof}
The previous proposition concludes the proof of Theorem~\ref{thm:opt1}. Furthermore, since $\QSTAB(\calR)$ has polynomially many variables and
constraints, we obtain the following algorithmic result.

\begin{theorem}\label{thm:dorg-polynomial}
We can compute a \MIS of a \dorg in polynomial time.
\end{theorem}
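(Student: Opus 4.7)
The plan is to turn the structural result of Theorem~\ref{thm:opt1} into an algorithm in the most direct way possible: solve the polynomial-size linear program $\overline{\mis}_\LP(\calR)$ and extract an extreme optimum, which Proposition~\ref{prop:integralpoint} guarantees is the indicator vector of a maximum independent set.

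First I would observe that both LPs involved have polynomial size. The variables are indexed by $\calR$, and since every rectangle is determined by a pair in $A \times B$, we have $|\calR| \leq |A|\cdot|B| = O(n^2)$. The clique constraints are indexed by points $q \in [n]^2$, giving $O(n^2)$ constraints (in fact it suffices to keep only the constraints corresponding to witness points of maximal cliques, but $O(n^2)$ is already enough for polynomiality). Testing membership of each rectangle in $\calZ$ takes $T(\calZ)$ time, so constructing the LP takes $O(n^2\,T(\calZ))$ time.

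Next, I would proceed in two stages. First, solve $\mis_\LP(\calR)$ with any polynomial-time LP algorithm to obtain its optimal value $z^*$. Second, solve $\overline{\mis}_\LP(\calR)$, in which the constraint $\sum_R x_R = z^*$ is added and the objective is the linear functional $\sum_R \area(R)\,x_R$. Since the coefficients $\area(R)$ are rational numbers of polynomial encoding length (the corners have integer coordinates in $[n]$), this is still a polynomial-size rational LP. Using an LP algorithm that returns a vertex of the optimal face (for instance, the ellipsoid method combined with the standard extreme-point rounding procedure, or a simplex-based method) yields an extreme point $x^*$ of $\overline{\mis}_\LP(\calR)$ in polynomial time.

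Finally, Proposition~\ref{prop:integralpoint} tells us that any such vertex $x^*$ is integral, and that its support $\calR_0$ is a maximum independent set of $\calR$. We simply output $\calR_0 = \{R \in \calR : x^*_R = 1\}$. The only subtlety worth flagging is ensuring that the LP routine returns a true extreme point rather than an arbitrary optimal solution of the face of optima; this is standard but it is the one step where I would be careful, since Proposition~\ref{prop:integralpoint} is an extreme-point statement, not a statement about every optimum of $\overline{\mis}_\LP(\calR)$. With this care the whole procedure is clearly polynomial, completing the proof. (A sharper running time will be obtained later via the combinatorial algorithm based on Dilworth's theorem, but for this theorem the LP route suffices.)
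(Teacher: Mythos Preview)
Your proposal is correct and is essentially the paper's own argument spelled out in detail: the paper simply observes that $\QSTAB(\calR)$ has polynomially many variables and constraints, so one can solve $\mis_\LP(\calR)$ to obtain $z^*$ and then $\overline{\mis}_\LP(\calR)$ to an extreme point, which Proposition~\ref{prop:integralpoint} certifies is integral. The one subtlety you flag---that the LP solver must return an actual extreme point---is handled implicitly because the feasible region of $\overline{\mis}_\LP(\calR)$ is the face $\QSTAB(\calR)\cap\{\sum_R x_R=z^*\}$ of $\QSTAB(\calR)$, so any vertex of that region is already a vertex of $\QSTAB(\calR)$, exactly as Proposition~\ref{prop:integralpoint} requires.
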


\section{A combinatorial algorithm for \MIS and \MHS of a \dorg} \label{sec:combalg}
In this section we devise an algorithm that constructs a maximum independent set
$I^*$ and a minimum hitting set $H^*$ of a given \dorg
$\calR(A,B,\calZ)$. The overall description of our algorithm is as follows. We
first
construct the family $\calRD$ of inclusionwise minimal rectangles of $\calR$.
Using a greedy procedure, we then find a maximal c.f.i.~family of rectangles
$\calK \subseteq \calRD$. Next, we use Lemma~\ref{lema:antichainalg} to
construct an independent set $I^*$ and a hitting set $H_0$ of the same size for
the family $\calK$.  Afterwards, we use a \emph{flipping procedure} to transform
$H_0$ into  a set $H^*$ of the same cardinality  in such a way that $H^*$ is
also a hitting set for $\calRD$, and therefore for $\calR$.
Since $I^*$ and $H^*$ have the same cardinality, they are both optimal.

The following notation will be useful to describe our algorithm.
Given  a rectangle~$R$, let $\vertA(R)$, $\vertB(R)$, $\vertC(R)$ and
$\vertD(R)$ be the bottom-left corner, top-right corner, top-left corner and
bottom-right corners of $R$, respectively. The notation is
chosen so  that if $R$ is a rectangle of the \dorg $\calR(A, B, \calZ)$, then
 $\vertA(R) \in A$ and $\vertB(R)\in B$ are the two defining corners of $R$.

Let $\calRD$ be the set of inclusion-wise minimal rectangles of $\calR$. 
The construction of the c.f.i.~family $\calK$ uses a specific order
in $\calRD$: We say that a rectangle $R$ \emph{precedes} a
rectangle $S$ in \emph{right-top order} if the top-left corner of $R$ is
lexicographically smaller than the top-left corner of $S$, this is,
\begin{itemize}
\item $\vertC(R)_x < \vertC(S)_x$, or
\item $\vertC(R)_x = \vertC(S)_x$ and $\vertC(R)_y< \vertC(S)_y$ .
\end{itemize}

Let $R_1, R_2, \ldots, R_t$ be the rectangles in $\calRD$ sorted in
right-top order. Observe that the top-left corners of these
rectangles are sorted from left to right and, in case of ties,
from bottom to top, hence the notation \emph{right-top} order. In
Figure~\ref{fig:right-top} there is an example of rectangles sorted in this way.

\begin{figure}[ht]
\centering
\includegraphics[scale=0.9]{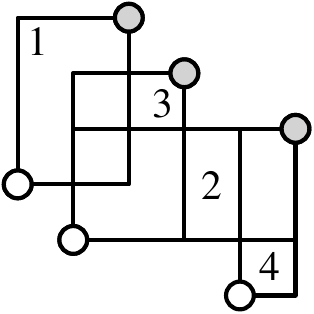}
\caption[The right-top order]{ Some rectangles labeled in right-top
order.}\label{fig:right-top}
\end{figure}

To construct $\calK$, we process the rectangles in the sequence
$(R_i)_{i=1}^t$ one by one. The rectangle $R_i$ being processed is added to~$\calK$ if its addition keeps $\calK$ corner-free. 
The set~$\calK$ obtained by this simple greedy procedure
is a maximal c.f.i.~subfamily of $\calRD$. As stated before, we then use
Lemma~\ref{lema:antichainalg} to compute a maximum independent set $I^*$
and a minimum hitting set $H_0$ for $\calK$. As we will prove later, $I^*$
is actually a maximum independent set of the entire family $\calR$. We can
further assume that the points in $H_0$ have integral coordinates in $[n]^2$,
because every rectangle hit by $(p_x,p_y)$ is also hit by $(\ceil{p_x},\ceil{p_y})$.

The last step of the algorithm consists of a flipping procedure that
essentially moves the points of $H_0$ around so that, in their new position,
they not only hit $\calK$ but the entire family $\calR$. As we discuss in
Section \ref{sec:discussion}, the underlying ideas for this
flipping algorithm are already present in Frank's algorithmic proof of Gy$\ddot{\mathrm{o}}$ri's
min-max result on intervals~\cite{Frank99}.

Consider a set $H \subseteq \ZZ^2$ and two points $p,q \in H$
with $p_x < q_x$, and $p_y < q_y$. By \emph{flipping} $p$ and $q$ in $H$ we
mean to move these two points to the new coordinates $r=(p_x,q_y)$ and
$s=(q_x,p_y)$. More precisely, this means replacing $H$ by $H'=H\setminus\{p,q\}
\cup \{r,s\}$. The following lemma states that flipping points of $H$
that are inside a rectangle of $\calRD$ can only increase the family of
rectangles of $\calRD$ that are hit by $H$.

\begin{lemma}\label{lem:increasing} Let $R$ be a rectangle in $\calRD$ that
contains $p$ and $q$. If
$S \in \calRD$ is hit by $H$ then it is also hit by $H'=H\setminus\{p,q\} \cup
\{r,s\}$.
\end{lemma}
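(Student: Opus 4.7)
The plan is to reduce to the only nontrivial case: the new set $H'$ differs from $H$ only at $p$ and $q$, so if $S$ is hit by some point of $H\setminus\{p,q\}$ there is nothing to prove. We may therefore assume $S$ is hit by $p$ or by $q$ (possibly both) and must show that $r=(p_x,q_y)$ or $s=(q_x,p_y)$ hits $S$. Write $R=\Gamma(\vertA(R),\vertB(R))$ with $\vertA(R)=(a_x,a_y)\in A$ and $\vertB(R)=(b_x,b_y)\in B$, and similarly for $S$, using primes. The hypotheses give $a_x\le p_x<q_x\le b_x$ and $a_y\le p_y<q_y\le b_y$.

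First I would handle the case in which $p$ hits $S$. Observe that $r$ hits $S$ iff $q_y\le b'_y$ and $s$ hits $S$ iff $q_x\le b'_x$, because the other inequalities are inherited from the facts $p\in S$ and $p_x\le q_x$, $p_y\le q_y$. Supposing for contradiction that both fail, i.e.\ $q_x>b'_x$ and $q_y>b'_y$, I would check that $\vertB(S)=(b'_x,b'_y)$ lies in $R$: the inequalities $a_x\le p_x\le b'_x<q_x\le b_x$ and $a_y\le p_y\le b'_y<q_y\le b_y$ give exactly $\vertB(S)\in R$. By Lemma~\ref{lema:inclusionwiseminimal} the only points of $A\cup B$ inside $R\in\calRD$ are $\vertA(R)$ and $\vertB(R)$; since $\vertB(S)\in B$ and $A\cap B=\emptyset$, we must have $\vertB(S)=\vertB(R)$, i.e.\ $b'_x=b_x$ and $b'_y=b_y$. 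But then $q_x>b'_x=b_x$ contradicts $q\in R$.

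The case in which $q$ hits $S$ is symmetric: now $r$ hits $S$ iff $p_x\ge a'_x$ and $s$ hits $S$ iff $p_y\ge a'_y$. Assuming both fail, the same kind of sandwich shows $\vertA(S)=(a'_x,a'_y)\in R$; Lemma~\ref{lema:inclusionwiseminimal} together with $A\cap B=\emptyset$ forces $\vertA(S)=\vertA(R)$, so $a'_x=a_x$, contradicting $p_x<a'_x$ and $p\in R$.

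The only delicate point, and what I expect to be the crux, is verifying that the ``outside-the-bounding-box'' assumption forces a corner of $S$ inside $R$; once that is done, inclusionwise minimality of $R$ (Lemma~\ref{lema:inclusionwiseminimal}) together with the disjointness $A\cap B=\emptyset$ delivers the contradiction immediately. Both cases together establish the lemma.
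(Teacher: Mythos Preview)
Your proof is correct and follows essentially the same route as the paper's: reduce to the case where $S$ is hit only by $p$ (or $q$), deduce from the failure of both new points to hit $S$ that the corner $\vertB(S)$ (respectively $\vertA(S)$) lands inside $R$, and then invoke Lemma~\ref{lema:inclusionwiseminimal} for the contradiction. The paper's version is terser---it packages the inequalities using integrality as $\vertB(S)\in[p_x,q_x-1]\times[p_y,q_y-1]\subseteq R\setminus\{\vertA(R),\vertB(R)\}$ and declares the $q$-case ``analogous''---but the substance is identical.
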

\begin{proof}
Assume by contradiction that $S$ is hit by $H$ but not by $H'$. Assume also that
$S$ is hit by $p$ (the case where $S$ is hit by $q$ is analogous). Since $S$
does not contain $r$ nor $s$, the point $b=\vertB(S)$
must be in the region $[p_x,q_x-1]\times[p_y,q_y-1]$. In particular, $b \in
R\setminus\{\vertA(R),\vertB(R)\}$. By
Lemma~\ref{lema:inclusionwiseminimal}, this
contradicts the inclusion-wise minimality of $R$. \end{proof}

We can now describe the flipping procedure. Let $K_1,\ldots, K_k$ be
the sequence of rectangles in $\calK$ in right-top order, and $H$ be
a set initially equal to $H_0$. In the $j$-th iteration of this
procedure, we find the point $p$ in $H\cap K_j$ of minimum $y$-coordinate
and the point $q$ in $H\cap K_j$ of maximum $x$-coordinate (breaking ties
arbitrarily). Since we maintain $H$ as a hitting set for $\calK$, both points
$p$ and $q$ exist. We then check if both points can be flipped (i.e. if
$p_x<q_x$ and $p_y<q_y$) and in that case, we update $H$ by flipping $p$ and
$q$. The description of our entire algorithm is depicted as
Algorithm~\ref{alg:mis-mhs}.

\begin{algorithm}[!ht]
\caption{for \MIS and \MHS 
of a \dorg $\calR(A,B,\calZ)$.}
\label{alg:mis-mhs}
\begin{algorithmic}[1]
\State{Construct $\calRD$ and sort them as $(R_i)_{i=1}^t$ in right-top order.}
\State{Greedily construct a maximum c.f.i.~family $\calK$ from the sequence
$(R_i)_{i=1}^t$. Let $(K_j)_{j=1}^k$ be the family $\calK$ sorted in right-top
order.}
\State{Find  a
maximum independent set $I^*$ and a minimum hitting set $H$ for~$\calK$.}
\For{$j=1$ \textbf{to} $k$}  \Comment{Flipping procedure starts.}
\State{Let $p$ be a point of minimum $y$-coordinate in $H \cap K_j$ and $q$ be
a point of maximum $x$-coordinate in $H \cap K_j$.}
\If{$p_x < q_x$ and $p_y < q_y$}
\State{$H \gets H\setminus \{p,q\} \cup \{(p_x,q_y), (q_x,p_y)\}$} \Comment{Flip
$p$ and $q$ in $H$.}
\EndIf
\EndFor
\State{Return $I^*$ and $H^* \gets H$.}
\end{algorithmic}
\end{algorithm}

To analyze the correctness of Algorithm~\ref{alg:mis-mhs} we need an auxiliary definition.
Recall that if a rectangle $R_i$ in the sorted sequence $\calRD$ is not included
in
$\calK$ then $R_i$ must have corner-intersection with a rectangle $R_{i'}\in
\calK$ with $i'<i$. The rectangle $R_{i'} \in \calK$ with \emph{largest}
index that has corner-intersection with $R_i$ is denoted as the \emph{witness}
of $R_i$, and written as $R_{i'}=\wit(R_i)$. By Lemma~\ref{lema:inclusionwiseminimal}, the fact that $\wit(R_i)$ precedes
$R_i$ in right-top order implies that $\vertD(\wit(R_i)) \in \tint(R_i)$ and $\vertC(R_i) \in
\tint(\wit(R_i))$.

\begin{lemma}\label{lem:correctness}
After iteration $j$ of the flipping
procedure, the set $H$ hits every rectangle in $\calK$ and every rectangle in
$\calRD\setminus \calK$ with witness in $\{K_l\colon l\leq j\}$.
\end{lemma}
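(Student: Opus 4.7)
I will argue by induction on $j$; the base case $j=0$ is immediate since $H_0$ is a minimum hitting set of $\calK$ by construction and the second condition is vacuous. For the inductive step, Lemma~\ref{lem:increasing} applied with the rectangle $K_j\in\calRD$ (which contains both $p$ and $q$) ensures that every rectangle in $\calRD$ hit by $H$ before iteration~$j$ remains hit afterwards, so $H$ continues to hit all of $\calK$ as well as all rectangles with witness $K_l$ for $l<j$. The task thus reduces to showing that $H$ hits every $R\in\calRD\setminus\calK$ with $\wit(R)=K_j$ after iteration $j$.

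Fix such an $R$. Using Lemma~\ref{lema:inclusionwiseminimal} together with the corner-intersection noted just before the lemma statement ($\vertD(K_j)\in\tint(R)$ and $\vertC(R)\in\tint(K_j)$), and writing $\alpha=\vertA(R)_x$ and $\beta=\vertB(R)_y$, we have $\vertA(K_j)_x<\alpha<\vertB(K_j)_x<\vertB(R)_x$ and $\vertA(R)_y<\vertA(K_j)_y<\beta<\vertB(K_j)_y$, so that $R\cap K_j=[\alpha,\vertB(K_j)_x]\times[\vertA(K_j)_y,\beta]$ is a non-degenerate rectangle lying entirely in both $R$ and $K_j$. The key observation is that the point $t:=(q_x,p_y)$ always belongs to $H$ after iteration $j$: if the flip executes then $t=s\in H$; otherwise the strict inequalities fail, forcing $p_x=q_x$ or $p_y=q_y$ by the extremality of $p,q$, so $t\in\{p,q\}\subseteq H$. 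Since $p,q\in K_j$ already yields $q_x\leq\vertB(K_j)_x<\vertB(R)_x$ and $p_y\geq\vertA(K_j)_y>\vertA(R)_y$, to prove $t\in R$ it suffices to establish the two inequalities $q_x\geq\alpha$ and $p_y\leq\beta$.

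I split on whether $H\cap(R\cap K_j)$ is empty. In the easy case, pick any $h$ in this set: then $h\in R$ gives $h_x\geq\alpha$ and $h_y\leq\beta$, hence $q_x\geq h_x\geq\alpha$ and $p_y\leq h_y\leq\beta$ by extremality. In the hard case, $H\cap K_j$ is entirely contained in the L-shape $K_j\setminus R=K_j\cap(\{x<\alpha\}\cup\{y>\beta\})$, and one must rule out both strip configurations ``$q_x<\alpha$'' (all of $H\cap K_j$ confined to $\{x<\alpha\}$) and ``$p_y>\beta$'' (all of $H\cap K_j$ confined to $\{y>\beta\}$). I plan to eliminate these by exhibiting auxiliary rectangles in $\calR$---most naturally $\Gamma(\vertA(K_j),\vertB(R))=[\vertA(K_j)_x,\vertB(R)_x]\times[\vertA(K_j)_y,\beta]$ and $\Gamma(\vertA(R),\vertB(K_j))=[\alpha,\vertB(K_j)_x]\times[\vertA(R)_y,\vertB(K_j)_y]$, both of which lie in $K_j\cup R\subseteq\calZ$ and hence in $\calR$---and invoking the inductive hypothesis, leveraging the witness property that no $K_l$ with $l>j$ corner-intersects $R$, so that any strip configuration would leave an earlier-witnessed rectangle (or some $K_l\in\calK$) unhit, contradicting the inductive hypothesis. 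Once the strip configurations are excluded, we obtain $p_x<\alpha\leq q_x$ and $p_y\leq\beta<q_y$; the flip executes, and $s=(q_x,p_y)\in R\cap K_j\subseteq R$. The main obstacle is precisely this last contradiction in the hard case: finding the right auxiliary rectangle whose ``must-be-hit'' status (possibly via a strengthened inductive invariant tracking the bottom-right extremal point of $H\cap K_l$ for each $l<j$) is incompatible with the assumed strip confinement.
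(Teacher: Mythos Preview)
Your framework is exactly the paper's: the same induction, the same reduction via Lemma~\ref{lem:increasing} to an unhit $R$ with $\wit(R)=K_j$, and the same two auxiliary rectangles $S=\Gamma(\vertA(R),\vertB(K_j))$ and $T=\Gamma(\vertA(K_j),\vertB(R))$. The obstacle you flag does not require a strengthened invariant; the stated one suffices once you establish two claims about $S$ and $T$. A preliminary observation you need is that $S,T\in\calRD$ (not merely $\calR$): both are contained in $K_j\cup R$, so by Lemma~\ref{lema:inclusionwiseminimal} applied to $K_j$ and $R$ they contain no point of $A\cup B$ besides their defining corners. It is also cleaner to assume from the start that $R$ is unhit at the end of iteration $j-1$; your ``easy case'' is then vacuous.

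The two claims are: (i) $S\in\calK$, and (ii) $T$ is already hit by $H$. For (i), suppose not and set $U=\wit(S)\in\calK$. Then $\vertD(U)\in\tint(S)$ but $\vertD(U)\notin\tint(K_j)$ (since $U,K_j\in\calK$ cannot corner-intersect), which forces $\vertD(U)$ to lie strictly below the bottom edge of $K_j$ and hence in $\tint(R)$. Moreover $\vertC(S)\in\tint(U)$ while $\vertA(K_j)\notin\tint(U)$ (minimality of $U\in\calRD$), forcing $\vertA(U)_x>\vertA(K_j)_x$, so $K_j$ precedes $U$ in right-top order. But then $U\in\calK$ corner-intersects $R$ and comes after $K_j$, contradicting $\wit(R)=K_j$. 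For (ii), if $T\in\calK$ this is immediate; otherwise $\wit(T)$ precedes $T$, and since $\vertC(T)=(\vertA(K_j)_x,\vertB(R)_y)$ shares its $x$-coordinate with $\vertC(K_j)$ but has strictly smaller $y$-coordinate, $T$ itself precedes $K_j$; hence $\wit(T)=K_{j'}$ for some $j'<j$ and the inductive hypothesis applies. With $R$ unhit, any point of $H\cap S$ lies in $S\setminus R\subseteq K_j$ with $x$-coordinate at least $\alpha$, giving $q_x\ge\alpha$; any point of $H\cap T$ lies in $T\setminus R\subseteq K_j$ with $y$-coordinate at most $\beta$, giving $p_y\le\beta$. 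Your $t=(q_x,p_y)$ then lies in $R$, exactly as you wanted.
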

\begin{proof} Observe that the algorithm only flips pairs of points that are
inside a given rectangle $K_j$ of~$\calRD$. By Lemma~\ref{lem:increasing}, the
collection of rectangles hit by $H$ increases in every iteration. So we only
need to prove that after iteration $j$ all the rectangles witnessed by $K_j$
are hit. We do this by induction.

If every rectangle witnessed by $K_j$ is hit at the end of iteration
$j-1$, we are done. Suppose then, that at then end of iteration $j-1$ there is at least
one rectangle $R$ in $\calRD\setminus \calK$ witnessed by $K_j$ that has not
been hit. Let $a=\vertA(K_j), b=\vertB(K_j), a'=\vertA(R)$ and $b'=\vertB(R)$ so
that $K_j=\Gamma(a,b)$ and $R=\Gamma(a',b')$. Since $K_j$ precedes $R$ in
right-top order and they have corner-intersection, the relative position of
both rectangles must be as depicted in Figure~\ref{fig:RK}.

\begin{figure}[ht]
\centering
\includegraphics[scale=0.7]{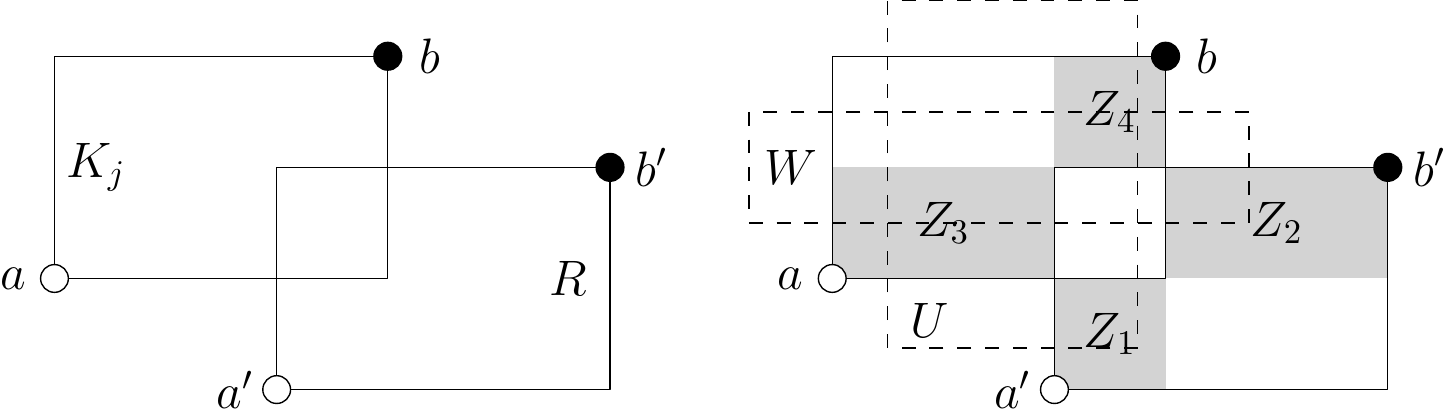}
\caption{On the left, the two rectangles $K_j = \Gamma(a,b)$ and
$R=\Gamma(a',b')$ with $\wit(R)=K_j$. On the right, auxiliary
constructions for the proof of Lemma~\ref{lem:correctness}}\label{fig:RK}
\end{figure}

Observe that the rectangles $S=\Gamma(a',b)$ and $T=\Gamma(a,b')$ are in
$\calRD$. We claim that $S \in
\calK$ and that $T$ is already hit by $H$.

Let us prove the first claim. Assume this is not the case; then $S
\in \calRD\setminus \calK$ must have a witness $U=\wit(S)$ in $\calK$. Let
$d=\vertD(U)$ be the
bottom-right corner of $U$. Since $S$ and $U$ have corner-intersection and $U$
precedes~$S$, the point $d$ is in the interior of $S$. Furthermore, since $U$ and $K_j$
are both in $\calK$,  $d$ cannot be in the interior of $K_j$. We conclude that
$d \in
Z_1 := \tint(S)\setminus \tint(K_j) =(a'_x,b_x)\times (a'_y, a_y]$. See
Figure~\ref{fig:RK} for reference. But since $U$ contains the
top-left corner of $S$ in its interior (because $U$ and $S$ have
corner-intersection) and $U$ does not contain $a$ on its interior (as otherwise $U$ would not be
inclusionwise minimal), we conclude that
$K_j$ precedes $U$ in right-top order. But then, $U$ is a rectangle in $\calK$
having corner-intersection with $R$ and appearing after $K_j$ in right-top
order.
This contradicts the definition of $K_j$ as witness of $R$, and concludes the
proof of the first claim.

Let us prove the second claim. If $T=\Gamma(a,b')$ is in $\calK$ then we are
done as $H$ is a hitting set for $\calK$. So assume that $T \in \calRD
\setminus \calK$. Let $W=\wit(T)$ and $d'=\vertD(W)$. Since $W$ and $T$ have
corner-intersection and $W$ precedes $T$, $d'$ is in the interior of $T$.
Furthermore, since both $W$ and $K_j$ are in $\calK$, $d'$ is not in the
interior of $K_j$. We conclude that $d'\in Z_2:=\tint(T)\setminus \tint(K_j) =
[b_x, b'_x)\times (a_y, b'_y)$.
Since $W$ contains the top-left corner of $T$ in its interior (because $W$ and
$T$ have corner-intersection), we deduce that $W$ precedes $K_j$ in right-top
order. But then $W=K_{j'}$ for $j'<j$ and so, by induction
hypothesis, rectangle $T$ was hit at the end of iteration $j'$. This concludes
the proof of the second claim.

Recall that $R$ is  not hit by $H$ at the end of iteration $j-1$. Since $T$ is
hit by $H$ the set $H \cap T \setminus R$ must be nonempty. In particular, the
point $p$ chosen by the algorithm, of minimum $y$-coordinate in $K_j \cap H$
must be in the zone $Z_3 := [a_x, a'_x) \times [a_y, b'_y]$. Similarly, since $S
\in \calK$, it must be hit by $H$ and so the set $H \cap S \setminus R$ is
nonempty. Therefore, the point $q$ chosen by the algorithm of maximum
$x$-coordinate in $K_j \cap H$ is in the zone $Z_4 := [a'_x, b_x]\times
(b'_y,b_y]$. In particular, $p_x < q_x$, $p_y < q_y$ and the point $s=(q_x,p_y)$
is in $R$.
We conclude that after flipping $p$ and $q$ in $H$, the rectangle $R$ is
hit. \end{proof}

Thanks to the previous lemma we obtain our main combinatorial result.

\begin{theorem}\label{teorem:teo1} Algorithm~\ref{alg:mis-mhs} returns an independent set $I^*$ and a hitting set $H^*$ of $\calR(A,B,\calZ)$ of the
same cardinality. In
particular, they are both optimal and
$\mis(\calR)=\mhs(\calR)$.
\end{theorem}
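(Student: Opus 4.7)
The plan is to close the chain of inequalities
\begin{equation*}
|I^*| \;\leq\; \mis(\calR) \;\leq\; \mhs(\calR) \;\leq\; |H^*| \;\leq\; |H_0| \;=\; |I^*|,
\end{equation*}
where $H_0$ denotes the hitting set of $\calK$ returned in step~3 of Algorithm~\ref{alg:mis-mhs}, before any flipping is performed. If each inequality in the chain is an equality, then $I^*$ and $H^*$ are both of optimum size and $\mis(\calR)=\mhs(\calR)$ follows. The middle inequality $\mis(\calR)\leq\mhs(\calR)$ is weak LP duality (every point of a hitting set hits at most one rectangle of any given independent set), so it remains to establish the two endpoint inequalities and the rightmost equality.

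The rightmost equality $|I^*|=|H_0|$ and the leftmost inequality $|I^*|\leq\mis(\calR)$ are both straightforward. Since $\calK$ is a c.f.i.~family by construction, Proposition~\ref{prop:cfi-comparability} ensures that $\calI(\calK)$ is a comparability graph; Lemma~\ref{lema:minmaxbasic} then gives $\mis(\calK)=\mhs(\calK)$, and Lemma~\ref{lema:antichainalg} delivers $I^*$ and $H_0$ of this common cardinality. The independence condition in $\calK$ is simply geometric non-intersection, and because $\calK\subseteq\calRD\subseteq\calR$, the set $I^*$ is independent in $\calR$ too. Moreover $|H^*|\leq|H_0|$ is immediate: each flip substitutes two points for two points (possibly already present in $H$), so the cardinality of $H$ is nonincreasing through the loop.

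The substantive step is $\mhs(\calR)\leq|H^*|$, namely that $H^*$ hits every rectangle in $\calR$. Because every rectangle of $\calR$ contains some rectangle of $\calRD$, it suffices to show $H^*$ hits $\calRD$. For each $R_i\in\calRD\setminus\calK$, the greedy rule forces $R_i$ to have been rejected because it corner-intersects an earlier rectangle of $\calK$, and hence the witness $\wit(R_i)\in\calK$ is well-defined. Applying Lemma~\ref{lem:correctness} with $j=k$ then shows that after the final iteration the set $H=H^*$ hits every rectangle in $\calK$ together with every rectangle in $\calRD\setminus\calK$, which is all of $\calRD$. The real work has already been done in Lemmas~\ref{lem:increasing} and~\ref{lem:correctness}, which ensure that flipping preserves existing hits and cascades new ones along witnesses; the present theorem merely assembles these pieces with Dilworth's theorem and the trivial cardinality bound for flipping, so I anticipate no further obstacle.
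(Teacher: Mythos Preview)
Your proof is correct and follows essentially the same approach as the paper's own proof: invoke Lemma~\ref{lem:correctness} at $j=k$ to see that $H^*$ hits all of $\calRD$ (hence $\calR$), use Lemma~\ref{lema:minmaxbasic} to obtain $|I^*|=|H_0|$, and close the chain via weak duality. Your use of the inequality $|H^*|\leq|H_0|$ is slightly more cautious than the paper's assertion that flips ``preserve cardinality,'' but the substance is identical.
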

\begin{proof}
The fact that $I^*$ is an independent set of $\calR$ follows by construction
and the fact that $\calK \subseteq \calR$. By Lemma~\ref{lem:correctness},
the set $H^*$ returned by the algorithm hits all the elements in $\calRD$.
Since every rectangle in $\calR$ contains a rectangle in $\calRD$, $H^*$ is
also a hitting set of $\calR$. Finally, since $H^*$ was constructed from
$H_0$ via a sequence of flips which preserve cardinality and since
$|H_0|=|I^*|$ by Lemma~\ref{lema:minmaxbasic}, we obtain that
$|I^*|=|H^*|$. As every hitting set has cardinality as least as large as every
independent set, we conclude that both are optimal.
\end{proof}

It is quite simple to  give a polynomial time implementation of
Algorithm~\ref{alg:mis-mhs}. In Subsection~\ref{sub:Implementation}, we discuss an
efficient implementation that runs in
$O(\matching(n\log n, n^2) + n^2\log^2 n)$ time, where
$n=|A\cup B|$. Here we are assuming that testing containment in $\calZ$ can be done in unit time. Otherwise, we need additional $O(n^2T(\calZ))$ time,
where $T(\calZ)$ is the time for testing containment in~$\calZ$.

\subsection{Implementation of the combinatorial algorithm}\label{sub:Implementation}

To implement our algorithm it will be useful to have access to a data structure
for dynamic orthogonal range queries. That is, a structure to store a
dynamic collection of points $P$ in the plane, supporting insertions, deletions
and queries of the following type: given an axis-parallel rectangle $Q$, is
there a point $p$ in $Q\cap P$? And if there is one, report any.

There are many data structures we can use. For our purposes, it would be enough
to use any structure in which each operation takes polylogarithmic (or even
subpolynomial) time. For concreteness, we use the
following result of Willard and Lueker~\cite{WillardL1985}, specialized to
two-dimensional Euclidean space.

\begin{theorem}[(Willard and Lueker)]
 There is a \emph{point data structure} for orthogonal range queries in the plane on
$n$ points supporting insertion, deletion and queries in time $O(\log^2 n)$,
and using space $O(n\log n)$.
\end{theorem}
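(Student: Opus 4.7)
The plan is to build a two-level range tree, which is the standard device for two-dimensional orthogonal range queries, and then make it dynamic via a weight-balanced (BB$[\alpha]$) scheme.

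First I would construct the primary structure: a balanced binary search tree on the $x$-coordinates of the $n$ points. At each internal node $v$ of the primary tree, I would attach a \emph{secondary} balanced BST storing the $y$-coordinates of all points whose $x$-coordinate lies in the subtree rooted at $v$. Since each point appears in the secondary trees of only its $O(\log n)$ ancestors in the primary tree, the total space is $O(n\log n)$.

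For a query with rectangle $[x_1,x_2]\times[y_1,y_2]$, the idea is standard: descend in the primary tree from the lowest common ancestor of $x_1$ and $x_2$ to identify the $O(\log n)$ canonical subtrees whose union represents exactly the points with $x$-coordinate in $[x_1,x_2]$. For each such canonical node $v$, query its secondary tree for any $y$-coordinate in $[y_1,y_2]$; each query costs $O(\log n)$, for a total of $O(\log^2 n)$. Reporting a witness point is then immediate.

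The hard part will be supporting insertions and deletions in $O(\log^2 n)$ while preserving balance, because ordinary rotations in the primary tree would force expensive rebuilds of the secondary structures hanging below the rotated nodes. The key idea (and the main obstacle to get right) is to use a weight-balanced BB$[\alpha]$-tree as the primary structure and rebalance by \emph{partial rebuilding}: when an update causes some subtree to violate the weight-balance invariant, rebuild that subtree and all of its attached secondary trees from scratch. A subtree of size $s$ can be rebuilt in $O(s\log s)$ time, and by the standard amortization for BB$[\alpha]$-trees, any node can only become unbalanced once per $\Omega(s)$ updates passing through it, so the amortized cost of rebuilds charged to each update is $O(\log^2 n)$. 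Combined with the $O(\log n)$ ancestor secondary-tree updates (each $O(\log n)$) that must accompany an insertion or deletion of a point, the total amortized update cost is $O(\log^2 n)$. This establishes the claimed bounds.
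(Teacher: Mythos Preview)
The paper does not prove this theorem at all; it is quoted as an external result of Willard and Lueker~\cite{WillardL1985} and used as a black box in the implementation section. So there is no ``paper's own proof'' to compare against.

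Your construction is the standard dynamic range tree with BB$[\alpha]$ primary structure and partial rebuilding, and it is essentially correct. One small point worth flagging: the argument you give yields \emph{amortized} $O(\log^2 n)$ per update, not worst-case. The theorem as stated does not distinguish the two, and for the paper's purposes (the structure is used inside an algorithm whose total running time is what matters) amortized bounds are perfectly adequate. If worst-case bounds were actually required, one would need additional machinery (e.g., global rebuilding or the more delicate deamortization techniques), but that is not needed here.
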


Using this data structure, we can easily implement
Algorithm~\ref{alg:mis-mhs}. Indeed, let us first see how to construct $\calRD$.

Start by inserting the points in $A$ and $B$ to the point data structure and
creating an empty list $\calRD$. For each point $a$ in $A$ sorted from left to
right and each point $b$ in $B$ from bottom to top check if $\Gamma(a,b)$ is a
rectangle in $\calR$ (i.e., if $a \leq_{\RR^2} b$ and if $\Gamma(a,b)\subseteq \calZ$) and if it has no points of $A\cup B$ in
its interior using the data structure.  If both conditions hold, add
$\Gamma(a,b)$ to the end of $\calRD$. Note that by going through $A\times B$ in
this order, the list $\calRD$ is sorted in right-top order. It is easy
to see that the entire procedure takes time $O(n^2 \log^2 n + n^2T(\calZ))$.

Constructing $\calK$ is similar. Start by creating an empty list $\calK$ and a
point data structure $\vertD$ containing the bottom-right corners of all rectangles
in~$\calRD$. Then, go through the ordered list $\calRD$ once again and add the current
rectangle $R\in \calRD$ to the end of $\calK$ if $R$ does not contain a point of $\vertD$ in its interior. It is easy to see that we obtain the
sorted c.f.i.~family $\calK$ described in our algorithm in this way and
that the entire procedure takes time $O(|\calRD|\log^2 (|\calK|))=O(n^2 \log^2
(|\calK|))$.

Afterwards, construct the intersection graph $\calI(\calK)$ in
$O(|\calK^2|)$ time and then use Lemma~\ref{lema:antichainalg} to get a
maximum independent set $I^*$ and a minimum hitting set\footnote{More
precisely, the algorithm gives a chain partition of $\calI(\calK)$. This can
be transformed into a hitting set of $\calK$ by selecting on each chain returned
the bottom-left point of the mutual intersection of all rectangles in the
chain. The extra processing time needed is dominated by $O(|\calK|)$.} $H_0$ of
$\calK$ in time $O(\matching(|\calK|, |E(I(\calK))|)$.

To implement the flipping procedure, we initialize a point data structure
containing $H$ at every moment. Note that $|H|\leq n$ as $A \cup B$ is itself a
hitting set of $\calR$. In each of the $|\calK|$
iterations of the flipping procedure we need to find the lowest point and the
rightmost point of a range query. This can be done using binary search and the
query operation of the data structure losing an extra logarithmic factor, i.e.,
in time $O(\log^3 n)$. To flip two points of $H$, we perform
two deletions and two insertions in time $O(\log^2 n)$. The entire flipping
procedure takes time $O(|\calK|\log^3 n)$.

In Subsection~\ref{sub:bounds} we  prove that $|\calK|=O(n\log n)$ and
$|E(\calI(\calK))|=O(n^2)$. By using these bounds and the previous discussion, we
conclude that Algorithm~\ref{alg:mis-mhs} can be implemented to run in time
$O(\matching(n\log n, n^2) + n^2\log^2 n + n^2T(\calZ)).$

Hopcroft and Karp's~\cite{hopcroft1973n} algorithm for maximum matching on a bipartite graph
with $v$ vertices and $e$ edges runs in time $O(e\sqrt{v})$. Specializing this
to $v=O(n\log
n)$ and $e=O(n^2)$, $\matching(n\log n, n^2)$ is time $O(n^{2.5}\sqrt{\log n})$. On the other hand,
Mucha and Sankowski~\cite{mucha2004maximum}  have devised a randomized algorithm that returns
with high probability a maximum matching of a bipartite graph in time
$O(v^\omega)$,  where $\omega$ is the exponent for square matrix multiplication.
The current best upper bound for $\omega$ is approximately $2.3727$ by
Williams~\cite{Williams12}. From this discussion, we
obtain the following result.

\begin{theorem} We can implement Algorithm~\ref{alg:mis-mhs}  to
run in time $$O\left(\matching(n\log n, n^2) + n^2\log^2 n +n^2T(\calZ)\right),$$
where
$n=|A\cup B|$ and $T(\calZ)$ is the time needed to test if a rectangle is contained in $\calZ$. Using Hopcroft and Karp's implementation, this
is $$O(n^{2.5}\sqrt{\log n} +n^2T(\calZ)).$$ Using Mucha and Sankowski's randomized
algorithm, the running time can be reduced to $$O((n\log n)^\omega+n^2T(\calZ)),$$ where
$\omega <2.3727$ is the exponent for square matrix multiplication.
\end{theorem}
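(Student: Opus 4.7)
My plan is to sum up the running times of the four main phases of Algorithm~\ref{alg:mis-mhs} using the Willard--Lueker dynamic orthogonal range structure. Phase one constructs $\calRD$ by scanning all $O(n^2)$ pairs $(a,b)\in A\times B$; for each pair I would perform one containment test against $\calZ$ and one range query to check whether $\tint(\Gamma(a,b))$ contains some other point of $A\cup B$, totaling $O(n^2\log^2 n + n^2T(\calZ))$. Scanning $\calRD$ in right-top order and inserting each candidate into $\calK$ whenever a range query against the bottom-right corners of $\calRD$ returns nothing adds another $O(n^2\log^2 n)$. The flipping procedure performs $|\calK|$ iterations, each requiring a constant number of $O(\log^3 n)$ binary-search queries to locate the extremal points of $H\cap K_j$ and $O(\log^2 n)$ updates to reflect the flip, giving $O(|\calK|\log^3 n)$. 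Finally, the maximum antichain and minimum chain cover of $\calI(\calK)$ obtained via Lemma~\ref{lema:antichainalg} cost $O(\matching(|\calK|, |E(\calI(\calK))|))$.

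To collapse the four contributions into the stated bound, I would invoke the two structural estimates $|\calK|=O(n\log n)$ and $|E(\calI(\calK))|=O(n^2)$ that are proved in Subsection~\ref{sub:bounds}. With these in hand the flipping cost $O(n\log^4 n)$ is absorbed by the $n^2\log^2 n$ term, and the matching cost becomes $O(\matching(n\log n, n^2))$. Adding the phases yields the global bound $O\bigl(\matching(n\log n, n^2)+n^2\log^2 n+n^2T(\calZ)\bigr)$.

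For the two concrete specializations I would plug in the standard matching bounds. Hopcroft and Karp give $\matching(v,e)=O(e\sqrt{v})$, hence $\matching(n\log n, n^2)=O(n^{2.5}\sqrt{\log n})$, which dominates $n^2\log^2 n$ and produces the first stated running time. Mucha and Sankowski give $\matching(v,e)=O(v^\omega)$ with high probability, yielding $O((n\log n)^\omega)$; since $\omega>2$ this also dominates $n^2\log^2 n$, producing the second stated running time. In both cases the term $n^2T(\calZ)$ is kept separate because it reflects the cost of the $\calZ$-oracle and cannot be absorbed a priori.

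The proof itself is essentially an accounting exercise, so the genuine obstacle is not in the summation but in ensuring that the two size bounds $|\calK|=O(n\log n)$ and $|E(\calI(\calK))|=O(n^2)$ actually hold; the first is crucial, since without a subquadratic bound on $|\calK|$ the bipartite matching phase would degrade and the advertised $O(n^{2.5}\sqrt{\log n})$ complexity would fail to follow. I therefore expect the burden of the analysis to lie in Subsection~\ref{sub:bounds}, and I would treat the present theorem as the clean consequence of combining the phase-by-phase implementation analysis with those bounds.
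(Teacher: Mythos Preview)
Your proposal is correct and follows essentially the same route as the paper: it uses the Willard--Lueker structure for the $\calRD$ and $\calK$ construction, bounds the flipping phase by $O(|\calK|\log^3 n)$, feeds $|\calK|=O(n\log n)$ and $|E(\calI(\calK))|=O(n^2)$ from Subsection~\ref{sub:bounds} into Lemma~\ref{lema:antichainalg}, and then specializes to Hopcroft--Karp and Mucha--Sankowski exactly as the paper does. Your observation that the real work lies in the size bounds of Subsection~\ref{sub:bounds} is also precisely the paper's stance.
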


\subsection{Bounds for $\calK$.}\label{sub:bounds}

The bounds we prove on this section are valid for every corner-free
intersection subfamily of $\calRD$.
Given one such family $\calK$, define  its lower-left corner set
$\vertA(\calK)=\{\vertA(K)\colon K \in \calK\}$ and its upper-right corner set
$\vertB(\calK)=\{\vertA(K)\colon K \in \calK\}$. We start with a simple result.

\begin{lemma}\label{lem:verticalline}
  If all the rectangles of $\calK$ intersect a fixed vertical (or horizontal)
line $\lambda$, then $|\calK| \leq \abs{\vertA(\calK)} + \abs{\vertB(\calK)}
- 1.$
\end{lemma}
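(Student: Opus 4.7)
The plan is to reduce the statement to a short combinatorial bound on laminar families of intervals with a bipartite endpoint structure; I treat the vertical line case, the horizontal being symmetric.

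First, I would project each rectangle $K\in\calK$ onto the $y$-axis to obtain the interval $I_K=[\vertA(K)_y,\vertB(K)_y]$, and show that the collection $\mathcal{L}:=\{I_K:K\in\calK\}$ is a laminar family of intervals. Any two rectangles in $\calK$ both cross $\lambda$, so their $x$-projections both contain the point $\lambda$ and therefore overlap; hence two rectangles of $\calK$ are disjoint as rectangles if and only if their $y$-projections are disjoint. If instead they intersect, Proposition~\ref{prop:cfi-comparability} makes them comparable under~$\hookrightarrow$, and one $y$-projection is contained in the other. Thus any two intervals of $\mathcal{L}$ are either disjoint or nested. By the general position assumption distinct corners have distinct $y$-coordinates, so $K\mapsto I_K$ is injective and $\abs{\mathcal{L}}=\abs{\calK}$. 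Moreover, each interval in $\mathcal{L}$ has left endpoint in $P_A:=\{a_y:a\in\vertA(\calK)\}$ (of size $n_A=\abs{\vertA(\calK)}$) and right endpoint in $P_B:=\{b_y:b\in\vertB(\calK)\}$ (of size $n_B=\abs{\vertB(\calK)}$), and every element of $P_A\cup P_B$ is used as an endpoint of some interval.

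Next, I would prove the combinatorial bound $\abs{\mathcal{L}}\le n_A+n_B-1$. Call $L=[p,q]\in\mathcal{L}$ \emph{left-terminal} if no strict superset of $L$ in $\mathcal{L}$ shares the left endpoint $p$, and \emph{right-terminal} analogously. If $L$ were neither, then there would exist $L'=[p,q^*]\supsetneq L$ and $L''=[p^*,q]\supsetneq L$ in $\mathcal{L}$ with $q^*>q$ and $p^*<p$; but $L'\cap L''=L\ne\emptyset$ while neither of $L',L''$ contains the other, contradicting laminarity. Hence every interval of $\mathcal{L}$ is left- or right-terminal. For each $p\in P_A$, the intervals of $\mathcal{L}$ with left endpoint $p$ form a nested chain, so exactly one of them (the largest) is left-terminal; therefore $\mathcal{L}$ has exactly $n_A$ left-terminal intervals and, symmetrically, $n_B$ right-terminal ones. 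Any inclusion-maximal element of $\mathcal{L}$ is trivially both left- and right-terminal, so at least one interval falls in both classes, and inclusion-exclusion yields $\abs{\mathcal{L}}\le n_A+n_B-1=\abs{\vertA(\calK)}+\abs{\vertB(\calK)}-1$. Since $\abs{\mathcal{L}}=\abs{\calK}$, this is exactly the required bound.

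The main delicate step is verifying the laminarity of $\mathcal{L}$: this crucially combines the vertical-line hypothesis (forcing all $x$-projections to overlap, so only the $y$-behavior distinguishes intersecting pairs from disjoint ones) with the corner-free hypothesis through Proposition~\ref{prop:cfi-comparability} (turning intersection into nestedness under $\hookrightarrow$). Once laminarity is in hand, the bipartite terminal count and a short inclusion-exclusion deliver the bound cleanly.
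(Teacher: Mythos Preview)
Your proof is correct and follows essentially the same approach as the paper: project the rectangles onto the vertical line to obtain a laminar family of intervals (the paper appeals directly to the c.f.i.\ hypothesis for this, whereas you route it explicitly through Proposition~\ref{prop:cfi-comparability} and~$\hookrightarrow$), and then bound the size of that laminar family by $\abs{\vertA(\calK)}+\abs{\vertB(\calK)}-1$. The only difference is that the paper invokes the bound on laminar families of non-singleton intervals as a ``known fact'' provable by induction, while you supply a self-contained argument via left- and right-terminal intervals; the content is the same.
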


\begin{proof}
  Project all the rectangles in $\calK$ onto the line $\lambda$ to obtain a
collection of intervals in the line. Since $\calK$ is a c.f.i.~family, the
collection of intervals forms a \emph{laminar} family: if two intervals
intersect, then one is contained in the other. Let $X$ be the collection of
extreme points of the intervals. Since by assumption no two points in $A
\cup B$ share coordinates, $|X|=\abs{\vertA(\calK)} + \abs{\vertB(\calK)}$ and
furthermore, every interval is a non-singleton interval. To conclude the proof
of the lemma we use the following known fact, which can be proved by
induction: every laminar family of non-singleton intervals with
extreme points in $X$ has cardinality at most $\abs{X}-1$.
\end{proof}

Now we consider the situation where the family $\calK$ is not necessarily
stabbed by a single line.

\begin{lemma}\label{lem:verticallines}
Let $\mathcal{L}=\{\lambda_1,\ldots, \lambda_r\}$ be a collection of vertical
lines that intersect all the rectangles in $\calK$,
then $|\calK| \leq n(1+\lfloor \log_2(r) \rfloor) - r.$
\end{lemma}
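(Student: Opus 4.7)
The plan is to prove the bound by induction on $r$ via a divide-and-conquer argument that splits $\calK$ at the median stabbing line. The base case $r=1$ follows from Lemma~\ref{lem:verticalline} together with $|\vertA(\calK)|+|\vertB(\calK)|\leq n$, which yields $|\calK|\leq n-1 = n(1+\lfloor\log_2 1\rfloor)-1$.

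For the inductive step, fix $r\geq 2$ and set $k=\lfloor\log_2 r\rfloor\geq 1$. I would sort the lines of $\mathcal{L}$ by $x$-coordinate and choose $\lambda_m$ with $m=\lceil r/2\rceil$, partitioning $\calK$ into $\calK_L$ (rectangles strictly left of $\lambda_m$), $\calK_0$ (rectangles stabbed by $\lambda_m$), and $\calK_R$ (rectangles strictly right of $\lambda_m$). By the ordering, every rectangle of $\calK_L$ must be hit by one of $\lambda_1,\dots,\lambda_{m-1}$ and every rectangle of $\calK_R$ by one of $\lambda_{m+1},\dots,\lambda_r$; thus $r_L := m-1$ and $r_R := r-m$ both satisfy $r_L,r_R\leq \lfloor r/2\rfloor$ with $r_L+r_R=r-1$. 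Let $n_L,n_m,n_R$ count the points of $A\cup B$ strictly left of, on, and strictly right of $\lambda_m$; then $n_L+n_m+n_R=n$ and $n_m\leq 1$ by the distinct-coordinate assumption.

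Because rectangles in $\calK_L$ use corners only from the $n_L$ points strictly left of $\lambda_m$ and each remains inclusion-wise minimal in the restricted \dorg (by Lemma~\ref{lema:inclusionwiseminimal}), the induction hypothesis gives $|\calK_L|\leq n_L(1+\lfloor\log_2 r_L\rfloor)-r_L$, with the convention that this expression is $0$ when $r_L=0$; analogously for $\calK_R$. For $\calK_0$, Lemma~\ref{lem:verticalline} applied along $\lambda_m$ together with the observations $|\vertA(\calK_0)|\leq n_L+n_m$ and $|\vertB(\calK_0)|\leq n_R+n_m$ yields $|\calK_0|\leq n+n_m-1$.

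The key arithmetic fact is $\lfloor\log_2\lfloor r/2\rfloor\rfloor = k-1$ for all $r\geq 2$, so $1+\lfloor\log_2 r_L\rfloor\leq k$ and $1+\lfloor\log_2 r_R\rfloor\leq k$. This lets me bound $|\calK_L|\leq n_L k - r_L$ and $|\calK_R|\leq n_R k - r_R$ uniformly (the $r_L=0$ case reduces to $0\leq n_L k$). Summing the three bounds and using $r_L+r_R=r-1$:
\[
|\calK|\leq (n_L+n_R)k + n + n_m - 1 - (r-1) = n(1+k) - r + n_m(1-k),
\]
and the residual term $n_m(1-k)\leq 0$ because $k\geq 1$ and $n_m\geq 0$. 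The main pitfall I expect is the bookkeeping for the degenerate cases $r_L=0$ or $r_R=0$ (in which the corresponding subfamily is empty and one cannot literally invoke $\log_2 0$) and ensuring that the at-most-one point lying exactly on $\lambda_m$ is absorbed cleanly through the $n_m(1-k)$ term rather than causing double-counting between the three subfamilies.
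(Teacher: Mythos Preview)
Your divide-and-conquer induction is correct and the arithmetic checks out, including the handling of the floor/ceiling in $r_L,r_R\le\lfloor r/2\rfloor$, the identity $\lfloor\log_2\lfloor r/2\rfloor\rfloor=k-1$, and the absorption of the possible point on $\lambda_m$ via the $n_m(1-k)\le 0$ term. (The only cosmetic gap is that Lemma~\ref{lem:verticalline} gives $-1$ on the right when applied to an empty $\calK_0$; you should note separately that $|\calK_0|=0\le n+n_m-1$ holds trivially in that case.)

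The paper takes a genuinely different, non-recursive route. Instead of splitting at the median line and recursing, it partitions $\mathcal{L}$ in one shot into dyadic levels $\mathcal{L}_0,\dots,\mathcal{L}_{\lfloor\log_2 r\rfloor}$, where $\mathcal{L}_t=\{\lambda_{2^t(2k+1)}:k\ge 0\}$, and assigns each rectangle to the \emph{highest} level $t$ whose lines stab it. The point of this assignment is that within a fixed level the stabbing line is unique (two consecutive lines of $\mathcal{L}_t$ are separated by a line of $\mathcal{L}_{t+1}$), so Lemma~\ref{lem:verticalline} can be applied line by line and summed, giving at most $n-|\mathcal{L}_t|$ rectangles per level and hence $n(1+\lfloor\log_2 r\rfloor)-r$ in total. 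Your approach is arguably more elementary to discover and tracks the $-r$ term transparently through the recursion; the paper's dyadic partition avoids setting up a restricted \dorg for the inductive call and avoids any case analysis on $n_m$ or on empty subproblems, at the cost of the slightly clever level assignment.
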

\begin{proof}
Assume that $\lambda_1, \ldots, \lambda_r$ is sorted from left to right.
Consider the following collections of vertical lines:
\begin{align*}
  \mathcal{L}_0 &= \{\lambda_{2k+1}\colon k \geq 0\} =
\{\lambda_1,\lambda_3,\lambda_5,\lambda_7, \lambda_9\ldots\}.\\
  \mathcal{L}_1 &= \{\lambda_{4k+2}\colon k \geq 0\} =
\{\lambda_2,\lambda_6,\lambda_{10},\lambda_{14},\ldots\}.\\
  \mathcal{L}_2 &= \{\lambda_{8k+4}\colon k \geq 0\} =
\{\lambda_4,\lambda_{12},\lambda_{20},\lambda_{28},\ldots\}.\\
  &\cdots\\
  \mathcal{L}_t &= \{\lambda_{2^t (2k + 1)}\colon k \geq 0\}.
\end{align*}
The collections $\mathcal{L}_0, \dots,\mathcal{L}_{\lfloor \log_2(r)\rfloor }$
form a partition of $\mathcal{L}$. Include each rectangle $R$ in $\calK$ into
the set $\calK_t$, $0\leq t \leq \lfloor
\log_2(r)\rfloor$, of largest index such that $\mathcal{L}_t$ contains a
vertical line that intersects $R$. See Figure~\ref{fig:verticalpartition} for an
example illustrating this construction.

\begin{figure}[h!]
  \centering \includegraphics{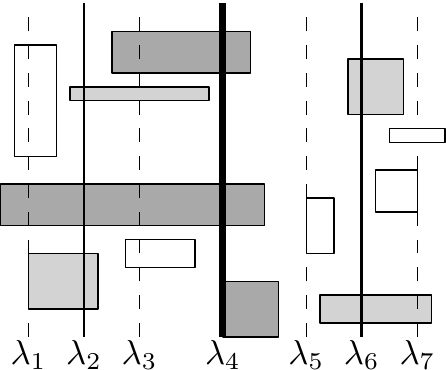}
  \caption{The dotted lines, thin lines and thick lines correspond to
$\mathcal{L}_0$, $\mathcal{L}_1$ and $\mathcal{L}_2$ respectively. White,
light-gray and dark-gray rectangles are assigned to $\calK_0$, $\calK_1$ and
$\calK_2$ respectively.}
  \label{fig:verticalpartition}
\end{figure}

Fix an index $t$. Every rectangle in $\calK_t$ intersect a unique line in
$\mathcal{L}_t$ (if it intersects two or more, then it would also intersect a
line in $\mathcal{L}_{t+1}$). For a given line $\lambda \in \mathcal{L}_t$, let
$\calK_{\lambda}$ be the family of rectangles in $\calK_t$ intersecting
$\lambda$. By Lemma~\ref{lem:verticalline}, the number of rectangles in
$\calK_\lambda$ is at most $\abs{\vertA(\calK_\lambda)} +
\abs{\vertB(\calK_\lambda)} -1$. Every point $a$ in $\vertA(\calK)$  belongs to
exactly one set in $\{\vertA(\calK_\lambda): \lambda \in \mathcal{L}_t\}$. It
belongs to the one corresponding to the first line
$\lambda$ is the first line on or to the right of $a$. Therefore, $\sum_{\lambda
\in \mathcal{L}_t} \abs{\vertA(\calK_\lambda)} \leq \abs{A}$, and similarly,
$\sum_{\lambda \in \mathcal{L}_t} \abs{\vertB(\calK_\lambda)} \leq \abs{B}$.
Altogether we get $  \abs{\calK_t} = \sum_{\lambda \in \mathcal{L}_t}
\abs{\calK_\lambda} \leq \abs{\vertA(\calK)} + \abs{\vertB(\calK)} -
\abs{\mathcal{L}_t} \leq \abs{\vertA(\calK)} + \abs{\vertB(\calK)}$.  Summing over $t$ we obtain $
  \abs{\calK} = \sum_{t=0}^{\lfloor \log_2(r) \rfloor} \abs{\calK_\lambda} \leq
(1+\lfloor \log_2(r) \rfloor)(\abs{\vertA(\calK)} + \abs{\vertB(\calK)})$.\qedhere
\end{proof}

If $\calK$ is the c.f.i. family obtained with Algorithm \ref{alg:mis-mhs}, then $n$  vertical lines are enough to intersect all rectangles in $\calK$. Therefore, $|\calK|=O(n\log n)$. Let us now bound the number
 of edges of the intersection graph $\calI(\calK)$.

\begin{lemma}
  $|E(\calI(\calK))| = O(n^2).$
\end{lemma}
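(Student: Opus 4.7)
By Proposition~\ref{prop:cfi-comparability}, $\calI(\calK)$ is the comparability graph of the poset $(\calK,\hookrightarrow)$, so the edges are in bijection with ordered pairs $(R,S)$ satisfying $R \hookrightarrow S$. The plan is to identify each such pair with a geometric object that can be counted directly.

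\emph{Step 1 (canonical geometric encoding).} For $R \hookrightarrow S$, we have $R \cap S = R_x \times S_y$. I would first observe that $R$ is uniquely determined by $R_x$: under the assumption that no two points of $A \cup B$ share a coordinate, the left endpoint $r_1 \in \pi_x(A)$ determines $\vertA(R)$ as the unique point of $A$ with that x-coordinate, and symmetrically $r_2 \in \pi_x(B)$ determines $\vertB(R)$. By the dual statement, $S$ is determined by $S_y$. Hence the intersection rectangle $I := R \cap S$ uniquely identifies the edge.

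\emph{Step 2 (emptiness of $I$).} Since $I \subseteq R$ and $R \in \calRD$, Lemma~\ref{lema:inclusionwiseminimal} implies that the interior of $I$ contains no points of $A \cup B$. Combining with Step~1, each edge is encoded by a rectangle whose bottom-left corner is in $\pi_x(A) \times \pi_y(A)$, whose top-right corner is in $\pi_x(B) \times \pi_y(B)$, and whose interior avoids $A \cup B$.

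\emph{Step 3 (counting).} The final step is to bound the number of such ``empty'' rectangles by $O(n^2)$. A direct approach is to apply Lemma~\ref{lem:verticalline} column-by-column: for each $R \in \calK$, the partners $S$ with $R \hookrightarrow S$ satisfy $S_x \supseteq R_x$, so $S \in \calK^{x = \vertA(R)_x}$, and this set has size $\leq 2n-1$. Summing gives the coarse bound $|E(\calI(\calK))| \leq |\calK| \cdot O(n) = O(n^2 \log n)$. To sharpen this to $O(n^2)$, I would charge each edge to the pair $(\vertA(R),\vertB(S)) \in A \times B$ (a set of size at most $n^2$) and argue, using that rectangles sharing a bottom-left (resp.\ top-right) corner form a chain in $\hookrightarrow$ by c.f.i., that the preimages of this charge aggregate to $O(n^2)$ rather than introducing an extra logarithmic factor.

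\emph{Main obstacle.} The delicate part is Step~3: removing the $\log n$ overhead arising from the bound $|\calK| = O(n\log n)$ of Lemma~\ref{lem:verticallines}. The naive estimates (either per-column via Lemma~\ref{lem:verticalline}, or per-level via the partition $\mathcal{L}_0 \cup \cdots \cup \mathcal{L}_t$) each lose a factor of $\log n$, and achieving $O(n^2)$ seems to require exploiting the additional chain structure of $\calK_a := \{R \in \calK : \vertA(R) = a\}$ and $\calK^b := \{S \in \calK : \vertB(S) = b\}$ together with the inclusion-wise minimality of the rectangles, so that the staircase of compatible pairs within each $\calK_a \times \calK^b$ telescopes into a bound depending only on $|A|\cdot |B| = O(n^2)$.
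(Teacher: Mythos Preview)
Your Steps~1 and~2 are correct and give an injective encoding of edges by ``empty'' intersection rectangles, but Step~3 is not a proof: you explicitly flag it as the ``main obstacle'' and offer only a charging heuristic. That heuristic does not close the gap. Charging the edge $(R,S)$ to $(\vertA(R),\vertB(S))\in A\times B$ gives at most $|A|\cdot|B|$ buckets, but the preimage of a single bucket $(a,b)$ consists of all pairs from $\calK_a=\{R\in\calK:\vertA(R)=a\}$ and $\calK^b=\{S\in\calK:\vertB(S)=b\}$ that are $\hookrightarrow$-comparable. Both $\calK_a$ and $\calK^b$ are chains, but nothing you have written bounds the number of comparable pairs between them by a constant, and summing the naive bound $|\calK_a|\cdot|\calK^b|$ over all $(a,b)$ only recovers $|\calK|^2=O(n^2\log^2 n)$. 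So as it stands the argument is stuck at the $O(n^2\log n)$ bound you already obtained from Lemma~\ref{lem:verticalline}.

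The paper's proof bypasses the encoding entirely and uses divide-and-conquer. The observation you already have---that every rectangle in $\calK$ has degree at most $O(n)$ in $\calI(\calK)$, since each neighbor must cross one of its four bounding lines and Lemma~\ref{lem:verticalline} bounds each such family by $n$---is applied \emph{only} to the rectangles stabbed by a median vertical line $\lambda$. By Lemma~\ref{lem:verticalline} again there are at most $n$ such rectangles, so the number of edges incident to $\calK_\lambda$ is $O(n^2)$. Edges among rectangles strictly left (resp.\ right) of $\lambda$ are handled recursively on an instance with grid width $\lfloor n/2\rfloor$ (resp.\ $\lceil n/2\rceil$), giving $\Lambda(n)\le \Lambda(\lfloor n/2\rfloor)+\Lambda(\lceil n/2\rceil)+O(n^2)$ and hence $\Lambda(n)=O(n^2)$. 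The point is that the $O(n)$ degree bound is only ``paid'' on $O(n)$ vertices per level, not on all of $\calK$; this is exactly the mechanism that eliminates the extra $\log n$ you could not remove.
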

\begin{proof}
Let $\ell=|E(\calI(\calK))|$. Let also $\Lambda(n)$ be the maximum possible value
of $\ell$ as a function of $n$. Recall that the vertices of all
rectangles in $\calK$ are in the grid $[n]^2$. Consider the vertical
line $\lambda=\{(x,\lfloor n/2 \rfloor)\colon x \in \RR\}$ that divides the grid
in two roughly equal parts. Count the edges in $\calI(\calK)$ as follows.
Let $E_1$ be the edges connecting pairs of rectangles that are totally to the
left of $\lambda$, $E_2$ be the edges connecting pairs of rectangles that are
totally to the right of $\lambda$, and $E_3$ be the remaining edges. We
trivially get that $|E_1| \leq \Lambda(\lfloor n/2 \rfloor)$ and  $|E_2| \leq
\Lambda(\lceil n/2 \rceil)$. We bound the value of $|E_3|$ in a different way.

Let $\calK_\lambda$ be the rectangles intersecting the vertical line $\lambda$.
Then, $E_3$ is exactly the collection of edges in $\calI(\calK)$ having an
endpoint in $\calK_\lambda$. By Lemma~\ref{lem:verticalline},
$\abs{\calK_\lambda} \leq n$. Now we bound the degree of each element of
$\calK_\lambda$ in $\calI(\calK)$. Consider one rectangle $R=\Gamma(a,b) \in
\calK_\lambda$. Every rectangle intersecting $R$ must intersect one of the four
lines defined by its sides. By using again Lemma~\ref{lem:verticalline}, we
conclude that the degree of $R$ in $\calI(\calK)$ is at most $4n$. Therefore, the
number of edges having an endpoint in $\calK_\lambda$ is at most $4n^2$.

We conclude that $\Lambda(n)$ satisfies the recurrence $
  \Lambda(n) \leq \Lambda(\lfloor n/2 \rfloor) + \Lambda(\lceil n/2 \rceil) +
4n^2,$ from which, $\Lambda(n) = O(n^2)$.\qedhere
\end{proof}

To finish this section, we note that it is is very easy to construct a family
$\calK$, for which $|E(\calI(\calK))| = \Omega(n^2)$. See
Figure~\ref{fig:omegan2} for an example. In this example, $\calK$ was obtained
using the greedy procedure. Also, if we let $n'=|A|=|B|=n/2$, then
$|\calK|=2n'+2$ and $|E(\calI(\calK))|=(n')^2+4 = \Omega(n^2)$.

\begin{figure}[h!]
  \centering \scalebox{0.8}{\includegraphics{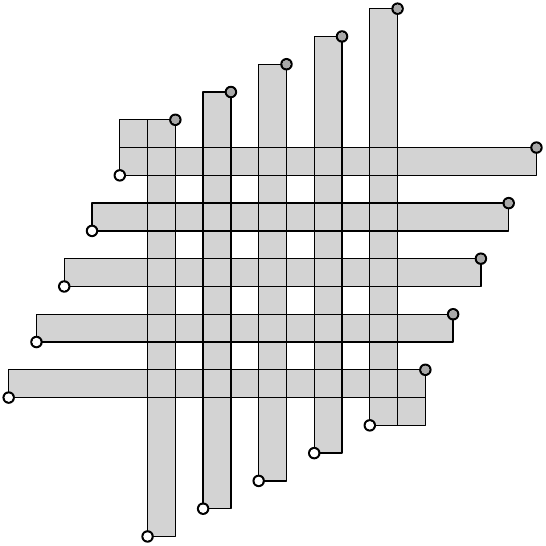}}
  \caption{Example with $|E(\calI(\calK))| = \Omega(n^2)$.}
  \label{fig:omegan2}
\end{figure}

On the other hand, it is not clear if there are examples achieving
$\abs{\calK}=\Omega(n\log n)$. In particular, we give the following conjecture
\begin{conjecture}\label{conj:linear}
  For any c.f.i.~family $\calK$ with vertices in $[n]^2$, $\abs{\calK}=O(n)$.
\end{conjecture}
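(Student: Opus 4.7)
The plan is to attempt Conjecture~\ref{conj:linear} by combining the laminar structure of the axis projections with an amortized charging argument. The starting observation is that in any c.f.i.~family $\calK$, the projections $\mathcal{L}_x=\{R_x:R\in\calK\}$ onto the $x$-axis and $\mathcal{L}_y=\{R_y:R\in\calK\}$ onto the $y$-axis are each laminar families of intervals in $[n]$. Indeed, if $R,S\in\calK$ intersect then by Proposition~\ref{prop:cfi-comparability} one of $R\hookrightarrow S$ or $S\hookrightarrow R$ holds, which forces both pairs of projections to be nested; if they are disjoint, then at least the $x$- or the $y$-projections are disjoint. A laminar family of subintervals of $[n]$ has at most $2n-1$ elements, so $|\mathcal{L}_x|,|\mathcal{L}_y|=O(n)$.

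A rectangle $R$ is determined by the pair $(R_x,R_y)$, so counting naively we obtain only the trivial $O(n^2)$ bound. To sharpen this to $O(n)$, I would exploit the \emph{reciprocal nesting} property forced by c.f.i.: if $R,S\in\calK$ intersect with $R_x\subsetneq S_x$, then the case $S\hookrightarrow R$ is ruled out, so $R\hookrightarrow S$ and hence $S_y\subseteq R_y$. Thus, as the $x$-projections grow along a chain in $\mathcal{L}_x$, the $y$-projections of any intersecting rectangles shrink along a chain in $\mathcal{L}_y$. This trade-off should severely restrict the number of realizable pairs $(R_x,R_y)$.

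I would make this rigorous along the following steps. First, build the laminar tree $T_x$ representing $\mathcal{L}_x$. For each node $I\in T_x$, observe that $\calK_I=\{R\in\calK:R_x=I\}$ has $y$-projections forming a laminar sub-family of $\mathcal{L}_y$. Process $T_x$ from root to leaves, maintaining for each node a budget of $y$-endpoints not already consumed by ancestor nodes, and use the reciprocal-nesting property to argue that each node uses only $O(1)$ fresh $y$-endpoints, so that $\sum_I|\calK_I|$ telescopes to $O(n)$. An alternative route is to associate to each $R\in\calK$ a canonical corner in $[n]^2$ (for instance $\vertC(R)$) and prove that each grid row or column is the $y$- or $x$-coordinate of $\vertC(R)$ for only $O(1)$ rectangles in $\calK$.

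The main obstacle will be controlling the case in which a single node $I\in T_x$ carries a long chain of nested rectangles: such a chain already consumes $\Omega(|\calK_I|)$ distinct $y$-endpoints, and one must ensure these endpoints cannot be reused at siblings or descendants of $I$ in $T_x$. This requires a careful bookkeeping that couples the two laminar structures, and I suspect the cleanest proof will come either from a planarity argument on the plane graph induced by the rectangle boundaries---so that Euler's formula yields the linear bound directly---or from an induction on $n$ obtained by deleting a grid line and accounting for the rectangles that straddle it. Either route represents the main open step; alternatively, the absence of an $\Omega(n\log n)$ construction despite the $O(n\log n)$ upper bound of Lemma~\ref{lem:verticallines} suggests that the correct line of attack may be showing that the extra logarithmic factor incurred by the geometric doubling in that proof is always absorbed by the reciprocal nesting.
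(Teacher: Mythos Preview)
The statement you are attempting is labeled a \emph{conjecture} in the paper and is left open there; the paper gives no proof. So there is nothing to compare your proposal against, and the relevant question is simply whether your argument works.

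It does not, and the failure is at the very first step. You assert that $\mathcal{L}_x=\{R_x:R\in\calK\}$ and $\mathcal{L}_y=\{R_y:R\in\calK\}$ are each laminar. Your justification handles intersecting pairs correctly (their projections are nested in both axes), but for disjoint pairs you only observe that \emph{at least one} of the two projection pairs is disjoint. That is not enough: take $R=[1,3]\times[1,2]$ and $S=[2,4]\times[3,4]$. These rectangles are disjoint (their $y$-projections do not meet), they trivially form a c.f.i.\ family, yet their $x$-projections $[1,3]$ and $[2,4]$ properly overlap. Hence $\mathcal{L}_x$ is not laminar. The paper's Lemma~\ref{lem:verticalline} obtains a laminar family only under the extra hypothesis that all rectangles are stabbed by a common line, which is exactly what forces the relevant projections to be pairwise intersecting and hence (by the c.f.i.\ property) nested or disjoint.

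Once the laminarity of $\mathcal{L}_x$ and $\mathcal{L}_y$ is gone, the entire downstream plan collapses: the laminar tree $T_x$ need not exist, the $O(n)$ bound on $|\mathcal{L}_x|$ is unavailable, and the ``reciprocal nesting'' trade-off you describe only applies to intersecting pairs, not to the family as a whole. You also acknowledge that even granting your setup, the charging step is ``the main open step''; so the proposal is a sketch with both a concrete error at the foundation and an admitted gap at the core. As it stands it does not settle the conjecture.
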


If this conjecture holds, it is easy to get better bounds for the running time
of Algorithm~\ref{alg:mis-mhs}.

\section{\WMIS of \dorgs and maximum weight cross-free matching}\label{sec:mwbp}
%Our algorithms for the jump number of \dorgs are based on the fact that the number of jumps plus the number of bumps of every linear extension of a poset $P$ is constant. But there is no weighted version of this equation, so we cannot extend our results to the case where the jumps are weighted. According to what we have done so far, it seems more natural to to extend our results to the case where weights are assigned to the bumps of a poset and the problem is to find the linear extension with maximum weighted bump. We call this the \defi{maximum weighted bump problem}. A similar variant was already considered by Ceroi~\cite{Ceroi03} for 2-dimensional posets\footnote{although they call it ``maximum weighted jump  problem''.}.

We now  consider the problem of finding a maximum weight independent set of the rectangles in a \dorg $\calR(A, B, \calZ)$ with weights $\{w_R\}_{R\in
\calR}$. This problem is significantly harder than the unweighted counterpart.
\begin{theorem}\label{thm:NPhard}
The maximum weight independent set problem is $\NP$-hard for unrestricted \dorgs even for weights in $\{0,1\}$.
\end{theorem}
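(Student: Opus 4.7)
The plan is to reduce from the $\NP$-hard maximum independent set problem on arbitrary axis-parallel rectangles~\cite{FowlerPT1981}, whose standard hardness construction can be taken to produce families in general position. Given such an instance $\mathcal{F}$, the first step is to set $A=\{\vertA(R):R\in\mathcal{F}\}$ and $B=\{\vertB(R):R\in\mathcal{F}\}$, and form the unrestricted \dorg $\calR(A,B)$. By construction each $R\in\mathcal{F}$ equals $\Gamma(\vertA(R),\vertB(R))$, and since every such pair satisfies $\vertA(R)\leq_{\RR^2}\vertB(R)$, we have $\mathcal{F}\subseteq\calR(A,B)$. Next, assign weights $w_R=1$ for $R\in\mathcal{F}$ and $w_R=0$ for every rectangle in $\calR(A,B)\setminus\mathcal{F}$. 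This reduction runs in polynomial time, since $|\calR(A,B)|\le|A|\cdot|B|\le|\mathcal{F}|^2$.

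For correctness, I would observe that weight-$0$ rectangles can always be removed from any candidate independent set without decreasing its total weight, since they contribute nothing to the objective and can only exclude other weight-$1$ rectangles from being chosen. Consequently, there is always an optimal \WMIS solution lying entirely in $\mathcal{F}$, whose weight is exactly its cardinality. Since the pairwise intersection relation on $\mathcal{F}$ is the same whether its rectangles are viewed as members of $\mathcal{F}$ or of $\calR(A,B)$, this optimum equals $\mis(\mathcal{F})$, which establishes the reduction and therefore $\NP$-hardness.

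The main — and really only — subtlety is a technical one: enforcing the general position conventions of Section~2, namely $A\cap B=\emptyset$ and distinct coordinates throughout $A\cup B$. Both can be arranged by an arbitrarily small perturbation of the input corners (perturbing each occurrence of a point as a bottom-left corner and as a top-right corner independently). Provided $\mathcal{F}$ is itself in sufficiently general position, such a perturbation preserves all pairwise intersections and non-intersections of $\mathcal{F}$, and therefore preserves $\mis(\mathcal{F})$, so no generality is lost.
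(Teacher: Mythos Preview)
Your proof is correct and follows essentially the same approach as the paper: reduce from \MIS on general axis-parallel rectangles by taking $A$ and $B$ to be the sets of bottom-left and top-right corners, embedding the instance into the unrestricted \dorg $\calR(A,B)$, and assigning weight~$1$ to the original rectangles and~$0$ to all others. The paper handles the general-position issue by invoking that the \MIS hardness in~\cite{FowlerPT1981} already holds when all rectangle vertices are distinct, whereas you achieve the same effect via perturbation; both are fine.
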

\begin{proof}
We reduce from the maximum independent set of rectangles problem, which is $\NP$-hard even if the vertices of the rectangles are all
distinct~\cite{FowlerPT1981}. Given an instance $\mathcal{I}$ with the previous property, let $A$ (resp.~$B$) be the set of lower-left (resp.~upper-right) corners of rectangles
in $\mathcal{I}$. Note that $\mathcal{I} \subseteq \calR(A,B)$ so we can find a maximum independent set of $\mathcal{I}$ by finding
a maximum weight independent set in $\calR$, where we give unit weight to each rectangle $R \in \mathcal{I}$, and zero weight to every other rectangle.
\end{proof}

Despite this negative result, we can still provide efficient algorithms for some interesting subclasses of \dorgs graphs $G=(A\cup B, \calR)$. Since
independent sets of $\calR$ are in one to one correspondence with cross-free matchings of $G$, the problem we are studying in this section is
equivalent to finding a \emph{maximum weight cross-free matching} of $G$. 

\subsection{A hierarchy of subclasses of \dorgs}\label{sec:hierarchy}
We recall the following definitions of a nested family of bipartite graph classes. We keep the notation $G=(A\cup B,\calR)$ since they are all \dorg
graphs.

A \emph{bipartite permutation graph} (or bipartite 2-dimensional graph) is the comparability graph of a two dimensional partially ordered set of height~2, where $A$ is the set of minimal elements and $B$ is the complement of this set. For our purposes, a \emph{two dimensional partially ordered set} is simply a collection of points in $\ZZ^2$ with the relation $p \leq_{\ZZ^2} q$ if $p_x \leq q_x$ and $p_y \leq q_y$. 

A \emph{convex bipartite graph} $G=(A\cup B,\calR)$ is a bipartite graph admitting a labeling $\{a_1, \ldots, a_k\}$ of $A$ so that the neighborhood of each $b\in B$ is a set of consecutive elements of $A$. A \emph{biconvex graph} is a convex bipartite graph for which there is also a labeling for $B=\{b_1, \ldots, b_l\}$ so that the neighborhood of each $a\in A$ is consecutive in $B$.

In an \emph{interval bigraph}, each vertex $v\in A\cup B$ is associated to a real closed interval $I_v$ (w.l.o.g.~with integral extremes) so that
$a\in A$ and $b\in B$ are adjacent if and only if $I_a \cap I_b\neq \emptyset$. 

A \emph{two directional orthogonal ray graph} (2dorg) is a bipartite graph on $A\cup B$ where each vertex $v$ is associated to a point $(v_x,v_y)\in \ZZ^2$, so that $a\in A$ and  $b\in B$ are connected if and only the rays $[a_x,\infty)\times\{a_y\}$ and $\{b_x\}\times(-\infty,b_y]$ intersect each other. Since this condition is equivalent to $\Gamma(a,b) \in \calR(A,B)$, two directional orthogonal ray graphs are exactly the unrestricted \dorgs graphs.

It is known that the following strict inclusions hold for these classes~\cite{BrandstadtLS1999,ShresthaTU2010}: bipartite permutation $\subset$
biconvex $\subset$ convex $\subset$ interval bigraph $\subset$ 2dorg. There are also polynomial time algorithms to recognize if a graph belongs to any
of these classes~\cite{ShresthaTU2010,spinrad2003efficient,muller1997recognizing}. On the  other hand, the problem of recognizing general (restricted) \dorg graphs is open.

We give a simple geometrical interpretation of some of the classes presented above as unrestricted \dorgs. 
The equivalence between the definitions below and the ones above are simple so the proof of equivalence is omitted. 
Bipartite permutation graphs are simply the unrestricted \dorg graphs $G=(A\cup B, \calR)$ where no two points in the 
same color class are comparable under $\leq_{\RR^2}$. 
Let $L = \{(x,-x)\colon x \in \ZZ\}$ be the integer points of the diagonal 
line $y=-x$. Similarly, let $L^+ = \{(x,y) \in \ZZ^2\colon y \geq -x\}$ and $L^- = \{(x,y)\in \ZZ^2\colon y \leq -x\}$ be the points weakly above and
weakly below $L$. Convex bipartite graphs are those unrestricted \dorg graphs $G=(A\cup B, \calR)$ where $A \subset L$ and $B\subset L^+$. Interval
bigraphs are those unrestricted \dorg graphs where $A \subset L^-$ and $B\subset L^+$. 

In the following subsections we investigate the maximum weight cross-free matching problem on these subclasses.

\subsection{Bipartite permutation graphs}
Every bipartite permutation graph $G$ is the graph representation of a \dorg $\calR(A,B)$ such that no two points in the same color class ($A$ or
$B$) are comparable under $\leq_{\RR^2}$. In this case we say that $\calR(A,B)$ is a \emph{bipartite permutation} \dorg and we can define a partial
order $\searrow$ on its rectangles. We say that $R \searrow S$ if $R$ and $S$ are disjoint and at least one of $R_x < S_x$ and $R_y > S_y$ holds. It
is not hard to verify (see, e.g., Brandst{\"a}dt~\cite{Branstadt1989}) that $(\calR,\searrow)$ is a partial order whose comparability graph is the
complement of $\calI(\calR)$. 
%The fact that $\calI(R)\equiv \{\calR,\{ef\colon e \text{ and } f \text{ cross}\}$, is a co-comparability graph when $\calR$ is the
%edge set of a bipartite permutation graph has already been observed by Brandst{\"a}dt~\cite{Branstadt1989}.
 In what follows, we use this fact to devise polynomial time algorithms for the maximum weight independent set of bipartite permutation \dorgs.

Observe that $\calI(\calR)$ is a perfect graph (because so is its complement~\cite{Lovasz1972}); therefore, using Proposition~\ref{prop:lovasz} we can
compute a maximum weight independent set of $\calR$ in polynomial time by finding an optimal vertex of
\begin{align*}
  \mis_\LP(\calR,w) &\equiv \max\bigg\{ \sum_{R \in \calR}w_Rx_R\colon \sum_{R:\ q \in R}x_R \leq 1, \text{$q \in [n]^2$}; x\geq 0 \bigg\}\enspace.
\end{align*}

We can also do this combinatorially using that the maximum weight independent sets in $\calR$ are exactly the maximum weight chains on the weight
partially ordered set $(\calR,\searrow)$.

\begin{theorem}\label{thm:algobip}
There is an $O(n^2)$ algorithm for the maximum weight cross-free matching (and hence, for \WMIS) of bipartite permutation graphs. 
\end{theorem}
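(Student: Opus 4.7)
By the reduction preceding the theorem, the maximum weight cross-free matching of $G=(A\cup B,\calR)$ coincides with the maximum weight of a chain in the poset $(\calR,\searrow)$. My plan is to compute this chain directly by dynamic programming over the $O(n^2)$ valid rectangles, with enough care that each rectangle costs $O(1)$ amortized.

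Sort $A=\{a_1,\ldots,a_p\}$ and $B=\{b_1,\ldots,b_q\}$ by increasing $x$-coordinate; the bipartite permutation hypothesis forces the $y$-coordinates in each sorted sequence to strictly decrease. A \emph{valid} rectangle is a pair $(i,j)$ with $a_i\leq_{\RR^2}b_j$, and these can be enumerated in $O(n^2)$ time. For each $i$ and $j$ let
\[J(i)=\max\{j'\colon b_{j',x}<a_{i,x}\},\qquad I(j)=\max\{i'\colon a_{i',y}>b_{j,y}\},\]
with $J(i)=0$ or $I(j)=0$ when the respective set is empty. A direct translation of the definition of $\searrow$ using the sortings shows that $\Gamma(a_{i'},b_{j'})\searrow\Gamma(a_i,b_j)$ if and only if $j'\leq J(i)$ or $i'\leq I(j)$; a short computation from $a_i\leq_{\RR^2}b_j$ yields $J(i)<j$ and $I(j)<i$, and any valid $(i',j')$ meeting either condition satisfies both $i'<i$ and $j'<j$. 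Thus the DP that computes the max-weight chain ending at $(i,j)$ can scan pairs in lexicographic order.

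Let $f(i,j)$ denote the maximum weight of a chain of $(\calR,\searrow)$ ending at the valid rectangle $(i,j)$. The recursion is
\[f(i,j)=w_{i,j}+\max\bigl(PC[J(i)],\,PR[I(j)]\bigr),\]
where $PC[j']$ is the maximum of $f$ over valid rectangles in columns $1,\ldots,j'$ and $PR[i']$ is the maximum over rows $1,\ldots,i'$ (empty maxima equal $0$). Process rectangles row by row in increasing $i$ and, within each row, in increasing $j$, maintaining $PC$ and $PR$. Two observations make the maintenance cheap: (i) each query $PR[I(j)]$ only reads indices $I(j)<i$, where $PR$ is already correct from the previous row; and (ii) each query $PC[J(i)]$ only reads indices $J(i)<j$, and the $f$-values produced within row $i$ correspond to columns $j>J(i)$, so they cannot perturb $PC[J(i)]$ during the row. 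Hence at the start of each row we read $PC[J(i)]$ once in $O(1)$, process each valid $(i,j)$ in $O(1)$, and at the end of the row rebuild $PC$ and update $PR[i]$ with a single $O(n)$ sweep. Over $p=O(n)$ rows this gives $O(n^2)$ total time; the final answer is $\max_{(i,j)}f(i,j)$.

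The step that needs care is the index translation of $\searrow$ and the derivation of the strict inequalities $J(i)<j$ and $I(j)<i$: these are exactly what decouple the two disjunctive conditions of $\searrow$ into independent one-dimensional prefix-max queries and guarantee that the lexicographic scan only consults already-finalized prefix maxima. Once this is in place the rest of the algorithm is routine bookkeeping.
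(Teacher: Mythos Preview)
Your argument is correct. The index translation of $\searrow$ into the two conditions $j'\le J(i)$ or $i'\le I(j)$ is sound (since either of $R'_x<R_x$ or $R'_y>R_y$ already forces $R'\cap R=\emptyset$, the disjointness clause in the definition of $\searrow$ is redundant here), and your verification that any valid predecessor has both $i'<i$ and $j'<j$ is exactly what guarantees the prefix arrays $PC$ and $PR$ are finalised where you read them. The $O(n^2)$ bound follows.

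Your route differs from the paper's. The paper does not maintain $f(i,j)$ for all $O(n^2)$ valid rectangles; instead it keeps only the $O(n)$ quantities $V_\downarrow(a)$ and $V_\rightarrow(b)$ and, through a four-scenario case analysis on how three consecutive rectangles in an optimal chain can sit (down--right, down--down, right--down, right--right), argues that for each transition one may restrict attention to a single ``heaviest'' rectangle per corner, precomputed in $O(n^2)$ total time. Your approach replaces that structural case analysis by the single observation that the predecessor set of $(i,j)$ is the union of a column prefix and a row prefix, which collapses the DP to two one-dimensional prefix-max lookups. This is more elementary and cleaner for arbitrary weights. The paper's scenario decomposition, on the other hand, is what drives the subsequent $O(n)$ algorithm for $\{0,1\}$ weights, since it already reduces each transition to a constant-size candidate set once the heaviest rectangles are known; your prefix-max formulation does not obviously specialise in the same way.
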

\begin{proof}
For simplicity, let us assume that all the weights are different. Our algorithm exploits the geometric structure  of the independent sets in $\calR$.
Since $A$ and $B$ are antichains of $\leq_{\RR^2}$, the condition $R \searrow S$ implies that $\vertA(R)_x < \vertA(S)_x$ and $\vertB(R)_y >
\vertB(S)_y$. Let $\calR^* \subset \calR$ be any maximum weight independent set and let $R \searrow S \searrow T$ be three consecutive rectangles in
$\calR^*$. We can extract the following information about $S$ (see Figure \ref{jump-fig28} for the first two scenarios, the other two are analogous):

\begin{itemize}
\item{\textbf{Down-right scenario:}}\\ If $R_y > S_y$ and $S_x < T_x$, then (i) $S$ is the heaviest rectangle with corner $\vertB(S)$. In particular, $S$ is determined by $\vertB(S)$.
\item{\textbf{Down-down scenario:}}\\ If $R_y > S_y$ and $S_y > T_y$, then (ii) $S$ is the heaviest rectangle below $R$ with corner $\vertA(S)$. In particular, $S$ is determined by $\vertA(R)_y$ and $\vertA(S)$.
\item{\textbf{Right-down scenario:}}\\ If $R_x < S_x$ and $S_y > T_y$, then (iii) $S$ is the heaviest rectangle with corner $\vertA(S)$. In particular, $S$ is determined by $\vertA(S)$.
\item{\textbf{Right-right scenario:}}\\ If $R_x < S_x$ and $S_x < T_x$, then (iv) $S$ is the  heaviest rectangle to the right of $R$ with corner $\vertB(S)$. In particular, $S$ is determined by $\vertB(S)$ and $\vertB(R)_x$.
\end{itemize}

 \begin{figure}[ht]
\centering
\includegraphics[scale=0.8]{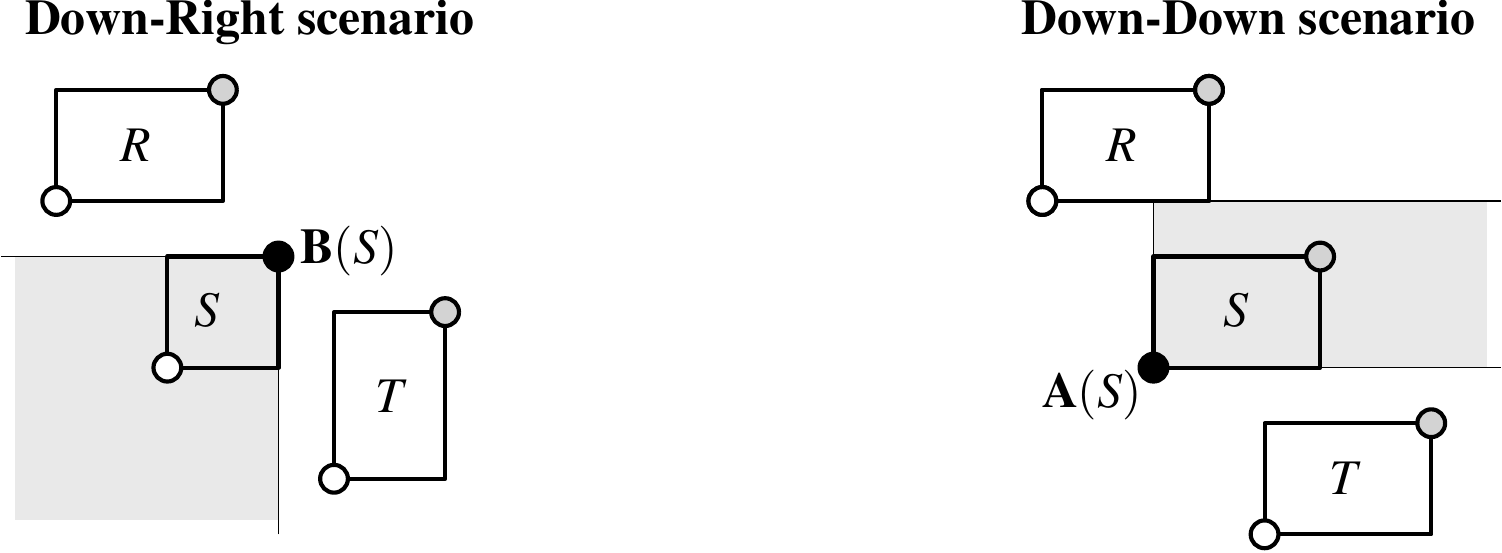}
\caption{Proof of Theorem \ref{thm:algobip}.}\label{jump-fig28}
\end{figure}

 For $R\in \calR$, let $V(R)$ be the maximum weight of a path in $(\cal{R},\searrow)$ that starts with $R$. For $a\in A$,  let $V_{\downarrow}(a)$  be the maximum weight of a path using only rectangles below $a$. Similarly, for $b\in B$,  let $V_{\rightarrow}(b)$ be the maximum weight of a path using only rectangles to the right of~$b$.

Clearly, $V_{\downarrow }(a) = \max\left\{V(S)\colon \text{$S$  rectangle below $a$} \right\}$, but from the decomposition into scenarios we can restrict $S$ to be a rectangle below $a$ satisfying properties (i) or (ii). Using this idea, we define:

\begin{compactitem}
\item $S(a)=$ the heaviest rectangle with $\vertA(S)=a$,
\item $S(a,a')=$ the heaviest rectangle below $a$ with $\vertA(S)=a'$,
\item $T(b)=$ the heaviest rectangle with $\vertB(S)=b$,
\item $T(b,b')=$ the heaviest rectangle to the right of $b$ with $\vertB(S)=b'$,
\item $\calS_a^{\text{(i)}}=\{T(b)\colon \text{$b$ below $a$}\}$,
\item $\calS_a^{\text{(ii)}}=\{S(a,a')\colon \text{$a'$ below $a$}\}$,
\item $\calS_b^{\text{(iii)}}=\{S(a)\colon \text{$a$ to the right of $b$}\}$,
\item $\calS_b^{\text{(iv)}}=\{T(b,b')\colon \text{$b'$ to the right of $b$}\}$,
\end{compactitem}
%
% \begin{align*}
% \calS^{\text{(i)}}_a  &\equiv \{ S\colon \text{for some $b$ below $a$, $S$ is the heaviest rectangle with $\vertB(S)=b$}\}, \\
% \calS^{\text{(ii)}}_a  &\equiv \{ S\colon \text{for some $a'\in A$, $S$ is the heaviest rectangle below $a$ with $\vertA(S)=a'$} \}, \\
% \calS^{\text{(iii)}}_b  &\equiv \{ S\colon \text{for some $a$ to the right of $b$, $S$ is the heaviest rectangle with $\vertA(S)=a$} \}, \\
% \calS^{\text{(iv)}}_b  &\equiv \{ S\colon \text{for some $b'\in B$, $S$ is the heaviest rectangle to the right of $b$ with $\vertB(S)=b'$} \},
% \end{align*}
and compute the recursion as follows:
\begin{align}\label{recursion}
V(R) &= \max\left\{V_{\downarrow}(\vertA(R)), V_{\rightarrow}(\vertB(R))\right\}+ w_R, \notag\\
V_{\downarrow }(a) &= \max\left\{V(S)\colon S \in \calS^{\text{(i)}}_a \cup \calS^{\text{(ii)}}_a  \right\}\!,\\
V_{\rightarrow}(b)&= \max\left\{V(S)\colon S  \in \calS^{\text{(iii)}}_b \cup \calS^{\text{(iv)}}_b  \right\}\!.\notag
\end{align}

It is easy to precompute all sets in $\{\calS^{\text{(i)}}_a, a\in A\}$ in $O(n^2)$ time. We can also precompute
$\calS^{\text{(ii)}}_a$ for all $a\in A$ in  $O(n^2)$ time: we fix $a'\in A$, and then traverse the points $a\in A$ from bottom to top, finding
$S(a,a')$, for all $a\in A$ in $O(n)$ time. Iterating now on $a'\in A$, we determine all the sets  $\calS^{\text{(ii)}}_a$ in $O(n^2)$ time. 
After this preprocessing, each rectangle in $\calS^{\text{(i)}}_a$ and $\calS^{\text{(ii)}}_a$ can be accessed in $O(1)$ time. The same holds for
$\calS^{\text{(iii)}}_b$ and $\calS^{\text{(iv)}}_b$, via a similar argument. Since the cardinality of each of these sets is $O(n)$,  and there are
$O(n)$ values $V_{\downarrow }(a)$ and $V_{\rightarrow}(b)$ to compute, the complete recursion for these terms can be evaluated in $O(n^2)$. Finally,
the maximum weight independent set can be found by computing $\max_R\{V(R)\}$ in $O(n^2)$ time and then backtracking the recursion.

If there are repeated weights, we break ties in Properties (i) and (ii) by choosing the rectangle~$S$ of smallest height and  we break ties in Properties (iii) and (iv) by choosing the rectangle $S$ of smallest width. This does not affect the asymptotic running time.
\end{proof}

We can further improve the running time of the previous algorithm when the weights are in $\{0,1\}$ and a suitable description of the input is given.
\begin{theorem}\label{thm4623}
We can compute a \WMIS of a permutation \dorg in $O(n)$ time when the input satisfies the following conditions:
\begin{enumerate}
\item the input graph is given by a biadjacency matrix $M\in \{0,1\}^{\abs{A}\times \abs{B}}$ where the points of $A$ and $B$ are sorted according to
the $x$-coordinate, and where we can access the first and last 1 of every row and column in $O(1)$ time.
\item the weights are given by a matrix $M'\in \{0,1\}^{\abs{A}\times \abs{B}}$ where the points of $A$ and $B$ are sorted according to the
$x$-coordinate, and where we can access the first and last 1 of every row and column in $O(1)$ time.
\end{enumerate}
\end{theorem}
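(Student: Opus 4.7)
The plan is to refine the $O(n^2)$ dynamic program of Theorem~\ref{thm:algobip}, exploiting two features of the present setting: the weights lie in $\{0,1\}$, and $M, M'$ support $O(1)$-time queries for the first and last~$1$ in each row and each column. Under these hypotheses a ``heaviest'' rectangle in any of the candidate sets of Theorem~\ref{thm:algobip} is simply any weight-$1$ rectangle from the set if one exists, and any weight-$0$ rectangle otherwise: by probing $M'$ first and falling back to $M$, each of the extremal rectangles $S(a)$, $S(a,a')$, $T(b)$, $T(b,b')$ is recovered in $O(1)$ time by a constant number of first/last-$1$ queries in the appropriate row or column. The ``below $a$'' and ``to the right of $b$'' constraints in $S(a,a')$ and $T(b,b')$ can be handled by a single index comparison, since $A$ and $B$ are antichains under $\leq_{\RR^2}$ and their points are listed in $x$-sorted order.

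With this oracle in hand, I would compute $V_\downarrow$ and $V_\rightarrow$ by a pair of linear sweeps. Merge the coordinates of $A\cup B$ into a single list sorted by $y$-coordinate; sweeping the list upward simulates letting $a$ rise, and on each event at most one new candidate of the form $T(b)$ (when a $b\in B$ passes below) or $S(a,a')$ (when a new $a'\in A$ enters the region below the current $a$) is added to $\calS_a^{(\text{i})}\cup\calS_a^{(\text{ii})}$. Since weights are $0/1$ and we only need the running maximum of $V(S)$ over this growing family, each event is handled in amortized $O(1)$. A symmetric right-to-left sweep over the merged $x$-coordinates yields $V_\rightarrow$. In each sweep, the value $V(S)$ of a newly inserted extremal rectangle $S$ is read from the table via a single application of the recursion~\eqref{recursion}, so the recursion is effectively folded into the sweep.

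The rectangles whose $V$-values are actually consulted are exactly the $O(n)$ distinct $S(\cdot)$ and $T(\cdot)$ produced by the oracle, so the table has $O(n)$ entries. Provided the sweeps are performed in reverse topological order with respect to $\searrow$ -- processing first the rectangles whose bottom-right corner lies farthest down and to the right -- every $V(S)$ appearing on the right-hand side of~\eqref{recursion} is available when needed. The global optimum is $\max_R V(R)$, and the independent set itself is recovered by backtracking through the table in $O(n)$.

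The main obstacle will be arguing that the two sweeps truly combine into a forward pass whose updates are amortized $O(1)$ -- in particular, that the ``down-down'' scenario, whose extremal rectangle $S(a,a')$ depends on both arguments, can still be updated in $O(1)$ work per event. This is where the antichain structure of $A$ under $\leq_{\RR^2}$ is crucial: when $a$ advances to the next $A$-point $a^+$ in the sweep, the only new element of $\calS^{(\text{ii})}$ is $S(a^+, a)$, which is obtained by a last-$1$ query on the row of $a$ in $M'$ (with fallback to $M$) followed by a single index check. A symmetric argument disposes of the ``right-right'' scenario involving $T(b,b')$, and the claimed $O(n)$ runtime follows.
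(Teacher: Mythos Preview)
Your proposal and the paper's proof diverge at the central step. The paper does not maintain the growing candidate sets $\calS_a^{(\mathrm{i})},\calS_a^{(\mathrm{ii})},\calS_b^{(\mathrm{iii})},\calS_b^{(\mathrm{iv})}$ via running maxima. Instead it argues that, with $\{0,1\}$ weights and the stated tie-breaking, each of these sets may be \emph{replaced by a singleton}: for instance $\calS_a^{(\mathrm{i})}$ becomes the single full rectangle $S$ below $a$ with minimum $\vertB(S)_x$, and $\calS_a^{(\mathrm{ii})}$ becomes the single full rectangle below $a$ with maximum $\vertA(S)_y$. The justification is an exchange argument tied to the scenario: in Down--Right the successor of $S$ lies to the right, so swapping in the rectangle with leftmost $B$-corner preserves feasibility and weight; analogously for the other three cases. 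Once every candidate set is a singleton, each $V_\downarrow(a)$ and $V_\rightarrow(b)$ costs $O(1)$, and the paper shows how to precompute all four singletons in $O(n)$ total using first/last-$1$ queries on $M$ and $M'$ (for $\calS_a^{(\mathrm{ii})}$ this involves an auxiliary pass computing $T'(b)$, the rectangle in $\{T(b')\}$ weakly below $b$ with highest $A$-corner).

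Your sweep argument, by contrast, has a gap in the down--down case. You assert that when $a$ advances to the next $A$-point $a^+$, the only new element of $\calS^{(\mathrm{ii})}$ is $S(a^+,a)$. But $S(a,a')$ is the heaviest rectangle \emph{below $a$} with lower-left corner $a'$; as the sweep line rises, new points of $B$ pass below it, and this can change $S(\cdot,a')$ for \emph{every} previously processed $a'$, not just the one that just entered. Even granting that the tie-breaking stabilises $S(\cdot,a')$ once some full rectangle with corner $a'$ is below the line, the event at which that rectangle should join the running maximum is when its $B$-corner passes below, not when $a'$ does, and your sweep does not trigger on that event. A second, smaller gap is the interleaving of the two sweeps: $V_\downarrow(a)$ needs $V(S)$, which needs $V_\rightarrow(\vertB(S))$, so the $y$-sweep and the $x$-sweep are not independent; you gesture at a reverse topological order but do not reconcile it with the two-sweep description. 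The paper's singleton reduction sidesteps both issues, since after the reduction the recursion has $O(n)$ variables with $O(1)$ dependencies each.
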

\begin{proof}
Our algorithm uses a simplified version of the algorithm for arbitrary weights introduced in Theorem \ref{thm:algobip}. 
We assume that all the points in $A \cup B$ are defining corners of at least one rectangle with nonzero weight, and therefore both $M$ and $M'$ have
no zero rows or columns. Call a rectangle full if its weight is 1 and void if its weight is 0.
 
Because of our tie-breaking rule, for each $b\in B$ and $a\in A$ the set $\calS^{\text{(i)}}_a$ contains at most one full rectangle $S$ satisfying
$\vertB(S)=b$: the one with minimum height. Recall that $\calS^{\text{(i)}}_a$ corresponds to the Down-Right scenario that assumes that the
rectangle immediately next to $S$ in the maximum weight independent set (which w.l.o.g.~consists only of full rectangles) is located to the right of
$S$. For this reason, the recursion in \eqref{recursion} still works if we redefine $\calS^{\text{(i)}}_a$ as the singleton containing the full
rectangle $S$ below $a$ minimizing $\vertB(S)_x$ ($\calS^{\text{(i)}}_a$ is empty if there is none). Such rectangle can be easily identified using the
matrices $M$ and $M'$: first, using $M$, we find the last element $b$ with $M(a,b)=1$; we define $\bar{b}$ as the element immediately after $b$ in
$B$ and finally, we look for the first row $\bar{a}$ such that $M'(\bar{a},\bar{b})=1$. The rectangle $S$ we were looking for is
$\Gamma(\bar{a},\bar{b})$. 
Similarly, for each $a'\in A$, $\calS^{\text{(ii)}}_a$ contains at most one full rectangle $S$ satisfying $\vertA(S)=a'$: the one with minimum height.
But since $\calS^{\text{(ii)}}_a$ corresponds to the Down-Down scenario, the recursion in \eqref{recursion} still works if we redefine
$\calS^{\text{(ii)}}_a$ as the singleton containing  the full rectangle $S$ below $a$ maximizing $\vertA(S)_y$. We compute all singletons
$\calS^{\text{(ii)}}_a$ in $O(n)$ time as follows. For each $b\in B$, we determine $T(b)$ using $M'$ (the minimum height full rectangle $S$ with
$\vertB(S)=b$ is given by the first 1-entry in the column of $M$ associated to $b$); then, we traverse the rectangles in $\mathcal{T}=\{T(b)\colon
b\in B\}$ increasingly with respect to $b_y$, keeping track of $T'(b)$ which we define as the rectangle $S\in \mathcal{T}$ weakly below $b$ with
highest $\vertA(S)_y$; finally, each singleton $\calS^{\text{(ii)}}_a$ can be computed in $O(1)$ by first finding the last element $b$ with
$M(a,b)=1$ and then returning $T'(\bar{b})$, where $\bar{b}$ is the element of $B$ immediately after $b$.

With $\calS^{\text{(iii)}}_b$ and $\calS^{\text{(iv)}}_b$ computed analogously, the $O(n)$ time algorithm is completed by solving the recursion
\eqref{recursion} on $V_{\downarrow }(a)$ and $
V_{\rightarrow}(b)$.
\end{proof}

Note that the assumptions made about the input in this theorem are not completely unrealistic. They hold, for example, when the ones of each row and column of $M$ and $M'$ are connected by a double linked list.

\subsection{Biconvex graphs}\label{sec:biconvex}
So far, we have identified a BRF graph $G$ with an arbitrary geometric representation $G=(A\cup B, \calR)$. The following result, valid for biconvex graphs, addresses  one  particular representation (note that in this representation different vertices may be mapped to points sharing coordinates).

\begin{theorem}\label{teo:biconvex}
Let $G=(A'\cup B', \calR')$ be a biconvex graph. Suppose that $A'=\{a'_1, \ldots, a'_s\}$ and $B'=\{b'_1, \ldots, b'_t\}$ are labellings of $A'$ and $B'$ so that the neighborhood of each $b'\in B'$ is a set of consecutive elements of $A'$ and vice-versa.
Map each $a'_i\in A'$ to the point $a_i=(i,-i)$ and each $b'_i \in B'$ to the point $b_i=(r(i),-l(i))$, where $l(i)$ (resp. $r(i)$) are the minimum (resp. maximum) index $j$ such that $\Gamma(a'_j,b'_i)\in \calR'$. Then $G=(A\cup B, \calR)$ is a representation of $G$ for which $\calI ( \calRD )$ is perfect.
\end{theorem}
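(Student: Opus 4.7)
The plan is to attack the theorem in two steps: first verify that $G=(A\cup B,\calR)$ is a bona fide geometric representation of the given biconvex graph, and then prove $\calI(\calRD)$ is perfect.

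For the first step the verification is direct. By definition $\Gamma(a_i,b_k)\in\calR$ iff $a_i\le_{\RR^2}b_k$, which after unpacking the coordinates $a_i=(i,-i)$ and $b_k=(r(k),-l(k))$ becomes $i\le r(k)$ together with $l(k)\le i$. This is precisely the condition $a'_i\sim b'_k$ in $G$, so the rectangles of $\calR$ biject with the edges of $G$ in the desired way.

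The heart of the argument is showing $\calI(\calRD)$ is perfect. I would try to prove the stronger statement that $\calI(\calRD)$ is a comparability graph, from which perfection follows via Lov\'asz's theorem. A useful preliminary observation is that two rectangles $\Gamma(a_i,b_k),\Gamma(a_j,b_l)\in\calR$ intersect if and only if $\{i,j\}\subseteq I_k\cap I_l$, where $I_k:=[l(k),r(k)]$ is the $A$-neighborhood interval of $b'_k$; equivalently, $a'_i,a'_j,b'_k,b'_l$ span a $K_{2,2}$ in $G$. With this reformulation, intersection in $\calI(\calR)$ is governed by the bicliques of $G$. I would then exhibit a candidate transitive orientation of $\calI(\calRD)$ by totally ordering its vertices first by top-right corner $\vertB$ (using the lexicographic order on $L^+$ that prioritizes larger $x$-coordinate, then smaller $y$-coordinate), breaking ties by the bottom-left corner $\vertA$ along the antidiagonal $L$. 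Every edge is then oriented from the larger to the smaller element in this order.

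The main obstacle is verifying transitivity: whenever $X\to Y$ and $Y\to Z$ are oriented edges, one needs $X$ and $Z$ to be adjacent, i.e., $X\cap Z\ne\emptyset$. This is not automatic --- for general rectangle families one can easily construct such chains in which the outer pair fails to intersect --- so the biconvex hypothesis must do real work here. My plan is to use the simultaneous consecutive-$1$'s property of the biconvex biadjacency matrix, namely that the ordering of $B'$ also makes each $A'$-neighborhood consecutive. This should restrict the way the intervals $\{I_k\}$ can interleave along any chain in $\calRD$, producing a common witness index $i^\star\in I_{k_X}\cap I_{k_Z}$ certifying $X\cap Z\ne\emptyset$. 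If the direct orientation approach proves too delicate (the tie-breaking rule may need refinement when several rectangles share a $\vertB$), a fallback is to invoke the Strong Perfect Graph Theorem and argue that biconvexity precludes induced odd holes and odd antiholes of length $\ge 5$ in $\calI(\calRD)$, via a case analysis on how the vertices of such an obstruction distribute among the clusters of rectangles sharing a common top-right corner.
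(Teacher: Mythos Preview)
Your verification that the map $a'_i\mapsto(i,-i)$, $b'_k\mapsto(r(k),-l(k))$ gives a geometric representation is correct, as is your reformulation of intersection as ``the four endpoints span a $K_{2,2}$.'' The difficulty lies entirely in the second step, and here there is a genuine gap.

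Your main plan---show that $\calI(\calRD)$ is a comparability graph via the total order ``first by $\vertB$ (larger $x$, then smaller $y$), then by $\vertA$''---does not work: the resulting orientation is \emph{not} transitive. Concretely, take $A'=\{1,\dots,6\}$ and three $B'$-vertices with neighbourhood intervals $I_{k_1}=[1,4]$, $I_{k_2}=[3,6]$, $I_{k_3}=[2,5]$; this is biconvex with $B'$-order $k_1,k_3,k_2$. In the representation one gets $b_{k_1}=(4,-1)$, $b_{k_2}=(6,-3)$, $b_{k_3}=(5,-2)$, and the rectangles
\[
X=\Gamma(a_5,b_{k_2}),\qquad Y=\Gamma(a_3,b_{k_3}),\qquad Z=\Gamma(a_2,b_{k_1})
\]
all lie in $\calRD$ (none contains a foreign point of $A\cup B$). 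They satisfy $X>Y>Z$ in your order, and $X\cap Y\neq\emptyset$, $Y\cap Z\neq\emptyset$, but $X\cap Z=\emptyset$ since $[5,6]\cap[2,4]=\emptyset$. So transitivity fails already on a path of length two. More generally, it is not established anywhere that $\calI(\calRD)$ is a comparability graph for this representation, and the paper does not claim it; you would need either a genuinely different orientation together with a proof, or to abandon this route.

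Your fallback---invoke the Strong Perfect Graph Theorem and rule out odd holes and odd antiholes---is exactly what the paper does, but the paper's execution is far from a routine ``case analysis on clusters sharing a top-right corner.'' For odd holes, the argument fixes the leftmost $A$-corner $a_{i_1}$, sets up two separating lines and four zones around $R_1$ and $R_k$, and tracks where the corners $b_{i_2},\dots,b_{i_{k-1}}$ can land; the contradiction ultimately comes from exhibiting four $B'$-intervals whose left/right endpoints are ordered in a pattern that no biconvex labelling of $B'$ can realise. This is precisely where the \emph{second} convexity (on $B'$) is used, and your proposal does not indicate how it would enter. For odd antiholes of length $\ge 7$ the paper gives a separate, shorter geometric argument. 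If you pursue the SPGT route you should expect to reproduce essentially this structure; the comparability shortcut, as stated, does not go through.
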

\begin{proof}
We use the strong perfect graph theorem~\cite{chudnovsky2006strong}, proving by contradiction that $\calI ( \calRD )$ has no odd-holes nor odd-antiholes. 
First, suppose there is an odd-hole $\mathcal{H}=\{R_1,R_2,\ldots, R_k\}\subseteq \calRD$ formed by rectangles $R_l=\Gamma(a_{i_l},b_{i_l})$ that (only) intersect $R_{l-1}$  and $R_{l+1}$ (mod $k$). Assume that $a_{i_1}$ is the leftmost  defining corner. The three values $i_1,i_2$ and $i_k$ must be different: $i_2$ and $i_k$ are different because the corresponding rectangles do not intersect, while $i_1$ is different from $i_2$ (or $i_k$) since otherwise any rectangle intersecting the thinnest of these two rectangles would intersect the other one.  The rest of the argument, sketched below, refers to the lines $\lambda_1$ and $\lambda_2$ and the zones $Z_1,Z_2$, $Z_3$ and $Z_4$ defined in Figure~\ref{jump-biconvex} (left), where $i_1<i_2<i_k$ is assumed  without lost of generality. Zones $Z_1$ and $Z_4$ are closed regions, while $Z_2$ and $Z_3$ are open. The following claims can easily be verified:
\begin{itemize}
\item The point $\lambda_1\cap \lambda_2$ is in both $R_1$ and $R_k$. 
\item For all $1< l < k$, $i_l< i_k$, Indeed, the region $\cup_{j=2}^l R_j$ is connected, so it contains a continuous path from $a_{i_2}$ to $a_{i_l}$. Having $i_l> i_k$ would imply that such path crosses $\lambda_1$ or $\lambda_2$, and therefore some rectangle in $\{R_j\}_{j=2..l}$ intersects either $R_1$ or $R_k$ (contradicting that $\mathcal{H}$ is a hole). The equality $i_l=i_k$ could only hold if $l=k-1$, but $i_{k-1}< i_{k}$ by the same reason that $i_1<i_2$.
\item $i_2<i_{k-1}$, otherwise $R_2$ and $R_{k-1}$ would intersect.
\item Corners $b_{i_1}$ and $b_{i_k}$ lie in $Z_4$.
\item The corner $b_{i_2}$ lies in $Z_2$: being in any other zone would contradict the intersecting structure among the rectangles in the hole; being in $\lambda_2$ would contradict the inclusion-wise minimality of $R_1$. Analogously, $b_{i_{k-1}}$ lies in $Z_3$.
\item Corners $b_{i_3}, \ldots b_{i_{k-2}}$ lie in $Z_1$, otherwise the corresponding rectangles would intersect $R_1$ or $R_k$.
\item Either $b_{i_3}$ lies above $b_{i_{k-1}}$ or $b_{i_{k-2}}$ lies to the right $b_{i_{k-1}}$: if not, $R_2$ and $R_{k-1}$ would intersect. In what follows, set $j=i_3$ in the first case and $j=i_{k-2}$ in the second one.
\end{itemize}
Finally, observe that the four indices $j_1=i_2,j_2=j,j_3=i_{k-1}$ and $j_4=i_1$ are such that the associated first three intervals satisfy $l(j_1)<l(j_2)<l(j_3)$ and $r(j_1)<r(j_2)<r(j_3)$ while $l(j_4)<l(j_2)$ and $r(j_2)<r(j_4)$. It can be checked that no biconvex labeling of $B$ can comply with these inequalities, which gives the contradiction.

 \begin{figure}[ht]
\centering
\includegraphics[scale=0.8]{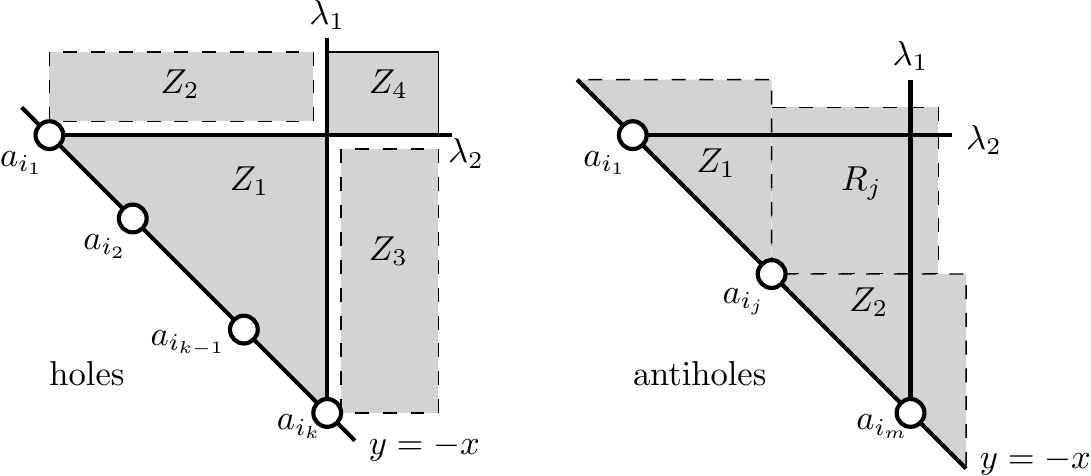}
\caption{Proof of Theorem \ref{teo:biconvex}.}\label{jump-biconvex}
\end{figure}

Now suppose the intersection graph $\calI(\calRD)$ has an odd-antihole $\mathcal{A}=\{R_1,R_2,\ldots, R_k\}$ of length at least 7. We keep the notation consistent from the odd-hole case; in particular, each $R_i$ does not intersect $R_{i-1}$ and $R_{i+1}$ (mod k). Assume that $a_{i_1}$ and $a_{i_m}$ are the leftmost and rightmost defining-corners in $A$, respectively. Using  Figure~\ref{jump-biconvex} (right) as reference, it is easy to see that.

\begin{itemize}
\item Rectangles $R_1$ and $R_m$ are the only ones with corners $a_1$ and $a_m$, respectively: same argument as with odd-holes.
\item Index $m\neq k$: if not, $R_4$ and $R_5$ would intersect both $R_1$ and $R_m$. Therefore, $\lambda_1 \cap \lambda_2$ lies in $R_4 \cap R_5$, which contradicts the intersecting structure of the antihole.
\item Rectangles $R_1$ and $R_m$ intersect: this follows from $m\neq k$ (above) and $m\neq 2$ which is proved in the same way.
\item There is a rectangle $R_j$ intersecting both $R_1$ and $R_m$ (because $k\geq 7$).
\end{itemize}
Rectangles $R_{j-1}$ and $R_{j+1}$ intersect each other, but do not intersect $R_j$, so they both lie either in zones $Z_1$ or in $Z_2$. This is a contradiction with the fact that $R_{j-1}$ and $R_{j+1}$ intersect $R_1$ and $R_m$.
 We conclude that $\calI(\calRD)$ has no odd-hole nor odd-antihole, and hence, it must be a perfect graph.
\end{proof}
In general, the intersection graph of inclusion-wise minimal rectangles in biconvex graphs is not always perfect (see Fig.~\ref{biconvexnonperfect}). But the previous construction shows that the \WMIS of biconvex graphs can be computed in polynomial time by solving a linear program.

\begin{figure}[ht]
\centering
\includegraphics[scale=0.8]{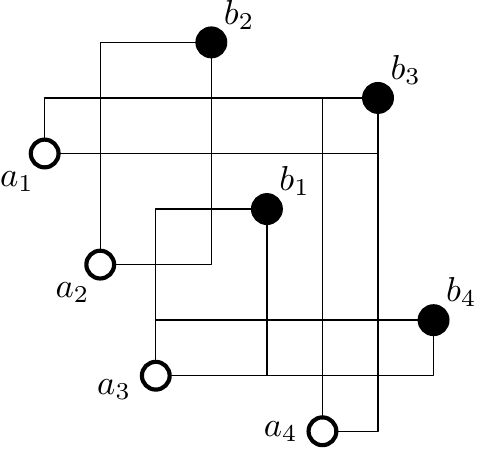}
\caption{The intersection graph $\calI(\calRD)$ of a biconvex graph is not always perfect.}\label{biconvexnonperfect}
\end{figure}
\subsection{Convex \dorgs}\label{subsec:convex}
Recall that convex bipartite \dorgs are the unrestricted \dorgs with $A\subset L$ and $B\subseteq L^+$ with $L$ the diagonal line $y=-x$.
Alternatively, they are the \dorg graphs $G=(A\cup B,\calR)$ such that the points in $A$ are incomparable under $\leq_{\RR^2}$.
As we discuss in the next section, the maximum weight independent set of convex \dorgs is equivalent to find the \defi{maximum weight point-interval set} of a collection of intervals. For the latter problem, described in Section \ref{sec:discussion}, Lubiw~\cite{Lubiw1991} provides a polynomial time algorithm  that directly translates into an $O(n^3)$-algorithm~\cite{soto2011jump} for \WMIS. Very recently, Correa et al.~\cite{CorreaFS14} improved Lubiw's result, obtaining a $O(n^2)$ algorithm.

\begin{theorem}\label{teo:lubiw}[Based on \cite{Lubiw1991,CorreaFS14,CorreaFPS14}] We can compute a \WMIS of a convex graph in $O(n^3)$ using Lubiw's algorithm for maximum weight point-interval set, and in quadratic time using the algorithm of Correa et al.~\cite{CorreaFS14}. 
\end{theorem}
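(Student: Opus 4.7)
The plan is to reduce \WMIS on a convex \dorg to the \emph{maximum weight point-interval set} problem and then invoke Lubiw's $O(n^3)$ algorithm~\cite{Lubiw1991} and the $O(n^2)$ algorithm of Correa et al.~\cite{CorreaFS14}. The main object is a weight-preserving bijection between rectangles of $\calR(A,B)$ and point-interval pairs on the real line, sending geometric intersection of rectangles to crossing of pairs; this is essentially the correspondence sketched in Section~\ref{sec:discussion}.

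Concretely, I would use the representation in which $A \subset L = \{(x,-x)\colon x \in \ZZ\}$ and $B \subseteq L^+$, and parametrize $L$ by the map $\pi(a) = a_x$. For $b \in B$, the condition $a \leq_{\RR^2} b$ for a point $a \in L$ simplifies to $-b_y \leq \pi(a) \leq b_x$, so the admissible partners of $b$ are exactly the points of $\pi(A)$ inside the interval $I_b := [-b_y, b_x]$. Hence every rectangle $\Gamma(a,b) \in \calR$ corresponds to a unique point-interval pair $(\pi(a), I_b)$ with $\pi(a) \in I_b$, and weights carry over unchanged.

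The main step, and the only real obstacle, is to verify that this map sends intersection of rectangles to the crossing relation used by Lubiw. Since $a \in L$ gives $R = \Gamma(a,b) = [a_x, b_x] \times [-a_x, b_y]$, a short case analysis on the relative order of $\pi(a_1)$ and $\pi(a_2)$ shows that $\Gamma(a_1,b_1)$ and $\Gamma(a_2,b_2)$ intersect if and only if $\pi(a_1) \in I_{b_2}$ and $\pi(a_2) \in I_{b_1}$; this is precisely the crossing condition for point-interval pairs. The standing convention that no two points of $A \cup B$ share a coordinate removes all boundary degeneracies.

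With the equivalence established, every independent set in $\calR$ of total weight $w$ corresponds to a non-crossing collection of point-interval pairs of the same total weight, and vice versa. Running Lubiw's algorithm on the reduced instance yields an $O(n^3)$-time algorithm for \WMIS on convex \dorgs, while substituting the improved algorithm of Correa et al.~gives the $O(n^2)$ bound, proving both parts of the statement.
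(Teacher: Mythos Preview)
Your proposal is correct and follows exactly the approach the paper takes: reduce to Lubiw's maximum weight point-interval set problem via the representation with $A\subset L$ and $B\subseteq L^+$, using the bijection $\Gamma(a,b)\mapsto(\pi(a),I_b)$ and the observation that geometric intersection corresponds to the crossing relation $\pi(a_1)\in I_{b_2}$, $\pi(a_2)\in I_{b_1}$. In fact you spell out the verification of this correspondence more carefully than the paper does, which simply asserts the equivalence in Section~\ref{sec:discussion} and cites \cite{Lubiw1991,CorreaFS14}.
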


\subsection{Interval bigraphs}

The natural geometric representation of an interval bigraph described at the beginning of the section is such that all the rectangles intersect the diagonal line $L$. We use this property and a recent result of Correa et al.~\cite{CorreaFPS14} to strengthen Theorem~\ref{thm:NPhard} as follows.%Recently, Correa et al.~\cite{CorreaFPS14} have shown that computing a \MIS of a family $\calI$ of rectangles intersecting $L$ is $\NP$-hard, via a reduction to PLANAR 3-SAT. Using their result we give the following extension of Theorem~\ref{thm:NPhard}

\begin{theorem} Computing a \WMIS of an interval bigraph is $\NP$-hard even for weights in $\{0,1\}$.
\end{theorem}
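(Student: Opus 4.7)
My plan is to prove the hardness by reducing from an NP-hard problem of Correa et al.~\cite{CorreaFPS14}. I start by recalling from Section~\ref{sec:hierarchy} that an interval bigraph is exactly an unrestricted \dorg graph $G=(A\cup B,\calR)$ with $A\subset L^-$ and $B\subset L^+$, so every rectangle $\Gamma(a,b)\in\calR$ crosses the diagonal line $L=\{(x,-x)\colon x\in\ZZ\}$. Under this identification, \WMIS on an interval bigraph is the problem of selecting a maximum-weight set of pairwise disjoint rectangles that all cross $L$, which is precisely the geometric form in which \cite{CorreaFPS14} states their hardness.

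Given an instance of the source NP-hard problem, the reduction proceeds as follows. I would place the ``left-side'' input objects as points of $A$ strictly below the diagonal and the ``right-side'' input objects as points of $B$ strictly above it, using a small coordinate perturbation so that no two points of $A\cup B$ share a coordinate (the standing assumption on \dorgs made in Section~2). Each rectangle $\Gamma(a,b)\in\calR(A,B)$ that encodes a valid pair in the source instance is assigned weight $1$, while every other rectangle receives weight $0$. Because the intersection pattern of axis-parallel rectangles crossing $L$ is exactly the crossing/conflict relation of the source objects, feasible solutions of the source problem are in bijection with independent sets in $\calR$ whose total weight equals the source objective.

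The main obstacle is controlling the weight-$0$ rectangles: I must argue that, without loss of generality, an optimal \WMIS contains no rectangle of weight $0$. This is routine, since deleting a weight-$0$ rectangle from an independent set leaves a feasible independent set of the same total weight; hence the \WMIS equals the maximum number of pairwise-compatible weight-$1$ pairs, which matches the source optimum. Once this step is verified and one confirms that the embedding is computable in polynomial time, the NP-hardness of \WMIS on interval bigraphs with $\{0,1\}$ weights follows directly, yielding the claimed strengthening of Theorem~\ref{thm:NPhard}.
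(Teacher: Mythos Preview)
Your approach is essentially identical to the paper's: both reduce from the $\NP$-hard \MIS problem on rectangles intersecting the diagonal~\cite{CorreaFPS14}, take the bottom-left and top-right corners of the input rectangles as $A\subset L^-$ and $B\subset L^+$ to obtain an interval bigraph, and then assign weight $1$ to the original rectangles and weight $0$ to the extra rectangles of $\calR(A,B)$, exactly mirroring the proof of Theorem~\ref{thm:NPhard}. The paper achieves general position by ``translating and piecewise scaling the plane'' where you use a small coordinate perturbation, but these are interchangeable for preserving the intersection pattern.
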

\begin{proof}
The problem of computing a \MIS of a family $\calI$ of rectangles intersecting $L$ is $\NP$-hard~\cite{CorreaFPS14}. Our hardness proof reduces from this problem, by transforming a collection of rectangles intersecting $L$ into a subset of rectangles of an interval bigraph (by translating and piecewise scaling the plane), and then using weights $\{0,1\}$ to distinguish the rectangles in the collection when solving the \WMIS. The proof is very similar to that of Theorem~\ref{thm:NPhard}.
%, only that the reduction is from \MIS of rectangles intersecting a diagonal line. %It is easy to transform this into an instance of 
%Let $\calI$ be an instance of \MIS of a family of rectangles intersecting the diagonal $L$. By translating and piecewise scaling the plane we can assume that all the corners of rectangles in $\calI$ have integer coordinates and still intersect the diagonal $L$. By letting $A$ (resp.~$B$) be the set of lower-left (resp.~upper right) corners of rectangles in $\calI$ we obtain $\calI \subseteq \calR(A,B)$. Since $A\subseteq L^-$ and $B\subseteq L^+$ we obtain that $\calR(A,B)$ is an interval bigraph, so we can get a \MIS of $\calI$ by finding a \WMIS of $\calR$, where we give unit weight to each rectangle $R\in \calI$, and zero weight to every other rectangle.
\end{proof}

It is worth noting that currently, convex \dorgs is the largest natural class of \dorgs for which the \WMIS problem is solvable in polynomial time.  Nevertheless, Correa et al.~\cite{CorreaFS14, CorreaFPS14} gave a dynamic programming algorithm to compute \WMIS of families of rectangles intersecting a diagonal line having the following property: if two rectangles intersect then they share a point below the diagonal. Based on this, they devise a 2-approximation for \WMIS of rectangle families intersecting the diagonal whose running time is quadratic in the number of rectangles. Using their result we directly conclude that there exists an $O(n^4)$ time algorithm to compute a 2-approximation for the \WMIS of an interval bigraph.

\section{Discussion}\label{sec:discussion}
This section positions our results in the context of several closely related results for seemingly unrelated problems. In a nutshell, besides of greatly improving the algorithmic efficiency, our results greatly reduce the gap between the very complex algorithms, results and proofs for a generalization of our problem and the much simpler ones for a special case of our problem. 

Let $G=(A \cup B, \calR)$ be a biconvex graph (see Section \ref{sec:hierarchy}). Let $M^G$ be the biadjancency matrix whose rows and columns are sorted according to its corresponding biconvex labeling; note that the rows and columns of $M^G$ have their 1's in consecutive position. We can identify the entries $M^G_{a,b}$ of value 1 with the rectangles $\Gamma(a,b)$ in the geometric representation of $G$. We can also identify a hitting point $p$ with the set of entries $M^G_{a,b}$ corresponding to rectangles hit by $p$; it turns out that all those sets must be block matrices with entries of value 1. The following two equivalences are easy to prove in the biconvex case: 1) a collection of entries of value 1 in $M^G$ induces an independent set of rectangles in $G$ if and only if no block matrix contains two of such entries; and 2) a set of points $p$ defines a hitting set in $G$ if and only if its corresponding set of block matrices cover all the 1's in $M^G$. Chaiken et al.~\cite{ChaikenKSS81} show that the minimum size of a rectangle cover of a biconvex matrix $M$ (a set of block matrices covering all the 1's in $M$) equals the maximum size of an antirectangle (a set of 1's in $M$ such that no block matrix contains two of them); this corresponds to Theorem~\ref{teorem:teo1} for biconvex graphs. 

%A natural question is whether the equivalence can be pushed further.  For a convex graph $G$, we can define a convex matrix $M^G$  so that the columns have consecutive 1's. Gyori~\cite{} shows that  in convex matrices, the minimum size of a rectangle cover equals the maximum size of an antirectangle, extending the result of Chaiken~\ref{}. However, it turns out that hitting points are not necessarily block matrices, and independent sets are not necessarily antirectangles. Gyori observes yet  an extra step is required to establish the equivalence. Gyori observes this so the equivalence between the corresponding min-max results may not hold. Yet, we can still derivate Theorem~\ref{} from Gyori's[] result, using a simple argument that seems not to have appeared before. For this, we need to introduce yet another equivalence introduced by Gyori in his min-max proof.

Going up to convex graphs, the work of Gy{\"o}ri~\cite{Gyori84} on point-interval pairs is particularly relevant. He works with a fixed ground set $A=\{1,\ldots,n \}$, intervals $I\subseteq A$ and point-interval pairs $(p,I)$ where $p\in I \subseteq A$. He introduces two notions: 1) a family of intervals $\mathcal{B}$ is a basis for another family of intervals $\mathcal{F}$, if every interval of  $\mathcal{F}$ can be written as union of intervals in $\mathcal{B}$; and 2) a collection of point-interval pairs $(p_i,I_i)_{i=1..m}$ is called independent if, for all $j\neq k$, either $p_j\notin I_k$ or  $p_k\notin I_j$. Gy{\"o}ri proves, non-constructively, that the cardinality of minimum basis for a family $\mathcal{F}$ equals the maximum cardinality of an independent family of point-interval pairs $(p,I)$, where $I$ is restricted to be in $\mathcal{F}$. To put this min-max result in our context, note that the containment relation on $(A \times \mathcal{F})$ defines a convex graph on $A$. Through the representation of convex \dorgs used in Section~\ref{subsec:convex}, we can represent $A$ as a set of points in the antidiagonal line $y=-x$, while we can represent  $\mathcal{F}$  as points weakly above this line. The rectangles in the convex graph $(A\cup \mathcal{F},\calR)$ become the set of all point-interval pairs $(p,I)$ where $I\in F$.
It is easy to see that independent set of rectangles are in  correspondence with independent families of point-interval pairs.  
Hitting sets of rectangles are also in correspondence with minimum basis of intervals, although not bijectively: we identify $q$ with the interval $I_q$  of points $a\in A$ with $a \leq_{\RR^2} q$; a hitting set $H$ is transformed into a basis $\{I_q: q\in H\}$ for $\mathcal{F}$ since $I=\cup_{q\in H, q\leq_{\RR^2} I} I_q$ for all $I\in F$. On the other hand, a basis for $\mathcal{F}$ induces a hitting set $H$ for $\calR$ of the same cardinality, through the inverse identification.

The equivalences just described show that Gy{\"o}ri's min-max result on point-interval pairs implies Theorem~\ref{teorem:teo1} for convex graphs, and also that Lubiw's algorithm~\cite{Lubiw1991} for weighted independent set of point-interval pairs implicitly provides an algorithm for the maximum weighted independent set of rectangles in convex \dorgs, as described in Theorem~\ref{teo:lubiw}. Furthermore, Gy{\"o}ri~\cite{Gyori84} also shows that for convex matrices (i.e., where the 1's on each column are consecutive), the minimum size of a rectangle cover equals the maximum size of an antirectangle, thus extending the min-max result of Chaiken~\cite{ChaikenKSS81}. This indeed follows from Theorem~\ref{teorem:teo1} for convex graphs, through a simple geometric argument we skip here~\cite{Gyori84}. 

Following Gy{\"o}ri's non-constructive proof, there was significant interest in obtaining a constructive, simple and efficient version of his min-max result, and their generalizations. Franzblau and Kleitman~\cite{FranzblauK84STOC} present an algorithmic proof that uses the original ideas from Gy{\"o}ri. Later, Frank~\cite{Frank99-2} presents an alternative algorithmic proof using new ideas, which form the core of Algorithm~\ref{alg:mis-mhs}: interpreting his algorithm in our geometric setting, the procedure starts from a convex \dorg $G=(A\cup B, \calR)$, determines a cross-free intersection family $\calK \subseteq \calR$, and then uses Dilworth's theorem to determine a maximum independent set and a hitting set of $\calK$, which then manages to transform into a maximum independent set and a hitting set of $\calR$. 
After the additional improvements in~\cite{benczur1999dilworth}, the theoretical performance\footnote{In~\cite{benczur1999dilworth}, running times are expressed in terms of $\abs{A}$ and $\abs{B}$, whereas we measure in terms of $n=\abs{A \cup B}$.} of the algorithm of Franzblau and Kleitman is better than the one of Frank,  providing running times of $O(n^2)$ and $O(n^{2.5} \log^{3/2} n)$ for the minimum hitting set and the maximum independent set of convex \dorgs,  respectively.

A concrete contribution of our work is to adapt the Frank's algorithm to the much broader set of \dorgs, with  no additional overhead in the running time. We remark that although we use some ideas from~\cite{benczur1999dilworth} to tweak the algorithm, our case is significantly more complex.

The algorithm of Frank was not developed only in pursuit of a clean algorithmic version of the theorem of Gyori (which anyway was already given by Franzblau and Kleitman). A few years earlier, Frank and Jord{\'a}n~\cite{FrankJ95} had presented a much more general min-max theorem which encompassed all the special cases above, including the case of \dorgs. The description of this theorem, in principle, is much more abstract.  A collection of pairs of sets $(S_i, T_i)$ is \defi{half-disjoint} if for every $j \neq k$, either $S_j \cap S_k$ or $T_j \cap T_k$ are empty.  A directed-edge $(s, t)$ \defi{covers} a set-pair $(S, T)$ if $s \in S$ and $t \in T$. A family $\mathcal{S}$ of set-pairs is \defi{crossing} if whenever $(S, T)$ and $(S', T')$ are in $\mathcal{S}$, so are $(S \cap T, S' \cup T')$ and $(S \cup T, S' \cap T')$. Frank and Jord{\'a}n  show that for every
crossing family $\mathcal{S}$, the maximum size of a half-disjoint subfamily is equal to the minimum size of a collection of directed-edges covering $\mathcal{S}$. Even further, the proof they offer was algorithmic, even though it relied on the ellipsoid method. 

Our min-max result follows from Frank and Jord{\'a}n's theorem: the inclusionwise minimal rectangles $\calRD$ of a \dorg $(A\cup B, \calR)$, once projected over both axes $\{(R_x,R_y): R\in \calRD\}$, becomes a crossing family of set-pairs for which half-disjoint subfamilies become independent sets in $\calR$, while coverings by directed edges become hitting sets in $\calR$. Thus, it may seem that our contribution is merely that of an algorithmic improvement. Yet, the literature that followed~\cite{FrankJ95} shows why this is not the case. The generality of Frank and Jord{\'a}n also carries significantly more abstract concepts, algorithms and proofs. Already the work of Frank ~\cite{Frank99-2} and Bencz\'{u}r et al.~\cite{benczur1999dilworth} show that a significant effort is required in order to translate the original ideas of Frank and Jord{\'a}n~\cite{FrankJ95} into an efficient and intuitive algorithmic proof for the theorem of Gy{\"o}ri. More recently, the combinatorial algorithm of Benczur~\cite{Benczur03} for pairs of sets gives a more intuitive view of these objects, but still both the algorithm and the  analysis are still much more complex and abstract than ours. The algorithmic proof we provide for Theorem~\ref{teorem:teo1} has the value of positioning \dorgs at the same level of complexity  than the convex \dorgs studied by Gy{\"o}ri, both conceptually and algorithmically, for the problems we are concerned here.

It is worth noting that the  min-max result also apply to \dorgs that are drawn in a cylinder $\mathbb{S}^1\times \RR$ or a torus $\mathbb{S}^1 \times \mathbb{S}^1$. In both surfaces axis-aligned rectangles are well-defined as cartesian products of closed intervals. Given two finite sets of points $A$ and $B$ and an arbitrary set $\calZ$ in a surface $\mathcal{S}$ that can be either a cylinder or a torus, we can still define the collection $\calRD$ of inclusionwise minimal axis-aligned rectangles contained in $\calZ$ with lower left corner in $A$ and upper right corner in $B$. It is easy to see that $\{(R_x,R_y) \colon R_x\times R_y \in \calRD\}$ is a crossing family of set-pairs. Applying Frank and Jord\'an theorem, the size of a maximum independent  set in $\calRD$ equals the size of a minimum hitting set. We believe it is not hard to modify our combinatorial algorithm to work in this case too, but we defer this to future work. 

\section{Acknowledgements}

The first author was partially supported by Nucleo Milenio Informaci\'on y Coordinaci\'on en redes ICM/FIC P10-024F  and by CONICYT via FONDECYT grant 11130266. The second author was partially supported by the FSR Incoming
Post-doctoral Fellowship of the Catholic University of Louvain (UCL), funded by the French Community of Belgium. 
The authors gratefully acknowledge Prof.~Andreas S.~Schulz for many stimulating discussions during the early stages of this paper.

%% The Appendices part is started with the command \appendix;
%% appendix sections are then done as normal sections
%% \appendix

%% \section{}
%% \label{}

%% References
%%
%% Following citation commands can be used in the body text:
%% Usage of \cite is as follows:
%%   \cite{key}         ==>>  [#]
%%   \cite[chap. 2]{key} ==>> [#, chap. 2]
%%

%% References with bibTeX database:

\bibliographystyle{elsart-num-sort}
\bibliography{rectangles-journal}

%% Authors are advised to submit their bibtex database files. They are
%% requested to list a bibtex style file in the manuscript if they do
%% not want to use elsarticle-num.bst.

%% References without bibTeX database:

% \begin{thebibliography}{00}

%% \bibitem must have the following form:
%%   \bibitem{key}...
%%

% \bibitem{}

% \end{thebibliography}

\end{document}